\documentclass{article}
\usepackage{algorithm}
\usepackage{algpseudocode}
\usepackage{amsmath}
\usepackage[square, numbers, sort&compress]{natbib}
\usepackage{caption}
\usepackage{amsthm}
\usepackage{multirow}

\newtheorem{theorem}{Theorem}
\newtheorem{definition}{Definition}
\newtheorem{lemma}{Lemma}
\newtheorem{corollary}{Corollary}
\newtheorem{assumption}{Assumption}
\newtheorem{proposition}{Proposition}
\newtheorem{remark}{Remark}

\usepackage{subcaption}
\usepackage{graphicx}

\usepackage{amssymb}
\usepackage[mathscr]{euscript}

\usepackage{natbib}

\usepackage{amsfonts,amsmath,amssymb}
\usepackage{bbm}
\usepackage{xspace}

\usepackage{multicol}
\usepackage{enumitem}

\newcommand{\argmin}{\operatornamewithlimits{argmin}}

 \usepackage[preprint]{neurips/neurips_2026}

\usepackage[utf8]{inputenc} % allow utf-8 input
\usepackage[T1]{fontenc}    % use 8-bit T1 fonts
\usepackage{hyperref}       % hyperlinks
\usepackage{url}            % simple URL typesetting
\usepackage{booktabs}       % professional-quality tables
\usepackage{amsfonts}       % blackboard math symbols
\usepackage{nicefrac}       % compact symbols for 1/2, etc.
\usepackage{microtype}      % microtypography
\usepackage{xcolor}         % colors

\title{Control Charts for Multi-Agent Systems}

\author{%
Hayden Helm\thanks{Corresponding author: \texttt{hayden@helivan.io}.} \\ Helivan
\And
Carey Priebe \\
Johns Hopkins University
\And
Brandon Duderstadt \\
Calcifer Computing
}

\begin{document}

\maketitle

\begin{abstract}
Generative agents have proven to be powerful assistants in a wide variety of contexts.
Given this success, users are now deploying agents with minimal restrictions in open ended, multi-agent environments.
Current methods for monitoring the dynamics of open-ended multi-agent systems are limited to qualitative inspection.
In this paper, we extend the process-theoretic notion of adaptive control charts to multi-agent systems to enable automated monitoring.
Using simulation, we demonstrate that adaptive control charts are necessary for monitoring multi-agent systems that can learn from their environment.
We further demonstrate, both empirically and theoretically, that adaptive control charts are susceptible to adversarial agents that defect sufficiently slowly.
These results illustrate a fundamental tradeoff in multi-agent system control: either agents in a system cannot learn \textit{or} the system is susceptible to adversaries.
\end{abstract}

\section{Introduction}
A generative agent is a system whose behavior can be affected by a change in its context, its tools, its base model, or its environment \citep{park2023generative, bridgeford2025detectingperspectiveshiftsmultiagent}.
For example, consider OpenClaw \citep{steinberger2025openclaw}, an open-source personal AI assistant that runs locally, maintains persistent memory, and can execute tasks autonomously via various interfaces. 
OpenClaw -- or any sufficiently capable agent \citep{kinniment2023evaluating} -- can navigate the internet, update its own knowledge base, and take actions on behalf of its user without explicit step-by-step instruction.

Generative agents are increasingly deployed in environments populated by other agents performing similar tasks.
Moltbook \citep{schlichtm2025moltbook}, a social network designed explicitly for OpenClaw instances to interact, exemplifies this trend.
On Moltbook, agents sign up, post, discuss, and upvote content with no direct human involvement.
Such environments are naturally modeled as multi-agent systems in which agents interact, exchange information, and influence each other's behaviors.
Moltbook is a caricature of more complex environments with similar properties, such as the internet, where agents and humans increasingly coexist and influence one another, but where controlled study is difficult.

Given the proliferation of generative agents in shared environments, developing methods for monitoring the dynamics of the system is critical to the reliability and safety of open environments.
Current approaches for measuring system behavior are limited to aggregate agent-level statistics or qualitative inspection \citep{park2023generative, chuang2024simulating}.
Recent work on change detection in multi-agent systems \citep{bridgeford2025detectingperspectiveshiftsmultiagent} addresses only the zeroth-order problem of determining whether an a group of agent's behavior changed between two timepoints.
These methods are not easily extended to dynamic environments, as any natural drift will be flagged as a change.
% A more practical monitoring system must instead distinguish benign behavioral drift from meaningful change.
% , a problem we refer to as first-order change detection.

% Current approaches for measuring agent behavior are limited to aggregate system-level statistics or qualitative inspection \citep{park2023generative, chuang2024simulating}.
% Recent work on principled change detection in multi-agent systems \citep{bridgeford2025detecting} can determine whether an agent's behavior changed between two timepoints, but these methods are not easily extended to environments where change is expected.
% In a dynamic environment, agents naturally drift as they interact with their surroundings and each other.
% A monitoring system with a fixed detection threshold would flag this drift, triggering frequent alerts and rendering the system impractical.
% Updating the detection threshold as the system evolves addresses this problem but may introduce new vulnerabilities.

% In this paper, we propose adaptive control charts for system-level monitoring based on the Temporal Perspective Space (TPS).
% We demonstrate that adaptive control charts are necessary for monitoring agents in dynamic environments, as non-adaptive methods cannot distinguish natural behavioral drift from meaningful change.
% We further demonstrate, both empirically
% and theoretically, that adaptive control charts are susceptible to adversarial agents
% that exploit the adaptiveness by defecting slowly. 
We make three distinct contributions:
\begin{enumerate}
    \item \textbf{Adaptive control for black-box multi-agent systems.} We introduce adaptive control charts for system-level multi-agent monitoring based on the Temporal Data Kernel Perspective Space \citep{bridgeford2025detectingperspectiveshiftsmultiagent} and the iso-mirror \citep{athreya2025euclidean};
    \item \textbf{Necessity of adaptive control in dynamic systems / environments.} We demonstrate empirically that adaptive control is necessary for system monitoring in dynamic environments
    \item \textbf{Vulnerability of adaptive control to adversarial agents.} We demonstrate empirically and theoretically that adaptive control of multi-agent systems is vulnerable to adversarial agents.
\end{enumerate}

%\textcolor{red}{[the conbtirubtions feel a bit redundant. would suggest we keep them and remove the para above]}

\section{Related Work}

\paragraph{Monitoring multi-agent systems.}
The deployment of generative agents is accelerating \citep{yee2024agents}, raising concerns about emergent risks in multi-agent settings \citep{hammond2025multiagent, tomasev2025distributional, shapira2026agentschaos}.
\citet{tomasev2025distributional} argue that general intelligence may emerge from interacting sub-AGI agents rather than a monolithic system, while \citet{hammond2025multiagent} catalog failure modes including collusion, coordination failures, and cascading effects.
Empirical studies have examined agent dynamics in sandbox environments \citep{park2023generative}, opinion formation \citep{chuang2024simulating}, and social alignment via mirroring \citep{mcguinness2025mirroring}.
\citet{helm2024tracking} introduced the perspective space for tracking multi-agent system evolution over time and showed it was correlated to causal interventions in the agent-to-agent communication network.
Previous works report aggregate statistics or qualitative observations but do not provide methods for automated system-level monitoring.

\paragraph{Low-dimensional representations of models and agents.}
Comparing generative systems requires tractable representations that capture behaviorally relevant differences.
Prior work has embedded models via internal activations \citep{duderstadt2024comparing, huh2024platonicrepresentationhypothesis}, parameter weights \citep{putterman2024learning}, or responses to reference queries \citep{acharyya2025consistent}.
The Data Kernel Perspective Space (DKPS) embeds models based on response similarities and supports statistical inference in the black-box setting \citep{acharyya2025consistent, helm2025inference}.
\citet{bridgeford2025detectingperspectiveshiftsmultiagent} extended DKPS to temporal settings (TDKPS) and developed hypothesis tests for detecting behavioral change between two timepoints.
We use TDKPS as the basis for automated system-level monitoring.

\paragraph{Change detection and control charts.}
Classical control charts monitor statistics against fixed thresholds to detect process deviations \citep{shewhart1931economic, page1954continuous}.
Modern extensions handle multivariate and high-dimensional data \citep{qiu2013statistical} or use deep learning for time series anomaly detection \citep{schmidl2022anomaly, choi2024timeseries}.
For network data, spectral embedding methods detect anomalous graphs and vertices in time series \citep{chen2020mase}, while latent position models localize changepoints via Euclidean mirrors \citep{chen2024euclidean}.
These methods assume a stationary baseline or require labeled anomalies for training.
In multi-agent systems, agents naturally drift as they interact, rendering fixed baselines useless.
% Existing work on multi-agent change detection \citep{bridgeford2025detecting} addresses whether behavior changed between timepoints, not continuous monitoring where change is expected.
To our knowledge, no prior work develops control charts for multi-agent systems that adapt to natural environmental drift.
\section{Methods}

We develop control charts for monitoring multi-agent systems by applying the iso-mirror \citep{athreya2025euclidean} to the Temporal Data Kernel Perspective Space (TDKPS) \citep{bridgeford2025detectingperspectiveshiftsmultiagent} representations of agents.
We first describe our setting and then detail the methodology.
We assume that the system-level monitoring system has (black-box) access to individual agents \citep{bridgeford2025detectingperspectiveshiftsmultiagent}.

\paragraph{Setting.}
We observe $N$ agents $\mathcal{F}^{(t)} = \{f_1^{(t)}, \ldots, f_N^{(t)}\}$ over $T$ timepoints.
In the black-box setting, each agent $f_n^{(t)}: \mathcal{Q} \to \mathcal{X}$ is a random mapping from a query space $\mathcal{Q}$ to a response space $\mathcal{X}$.
Given query $q$, the response $f_n^{(t)}(q)$ is sampled from distribution $P_n^{(t)}(q)$.
Let $g: \mathcal{X} \to \mathbb{R}^p$ be an embedding function mapping responses to real vectors.
For queries $Q = \{q_1, \ldots, q_M\}$ with $r$ responses per query, we denote the average embedded response matrix as $\bar{X}_n^{(t)} \in \mathbb{R}^{M \times p}$, where the $m$-th row is $\bar{X}_{nm\cdot}^{(t)} := \frac{1}{r} \sum_{k=1}^r g(f_n^{(t)}(q_m)_k)$.

\subsection{Temporal Data Kernel Perspective Space}

TDKPS \citep{bridgeford2025detectingperspectiveshiftsmultiagent} embeds all agent-timepoint pairs into a shared low-dimensional Euclidean space based on response similarities.
We construct the temporal distance matrix $D \in \mathbb{R}^{NT \times NT}$ with entries
% \begin{equation*} 
$ D_{(t,n),(t',n')} = \frac{1}{M} \|\bar{X}_n^{(t)} - \bar{X}_{n'}^{(t')}\|_F $,
% \end{equation*}
capturing behavioral differences between agent $n$ at time $t$ and agent $n'$ at time $t'$.

The $d$-dimensional TDKPS embeddings $\{\psi_n^{(t)} : n \in [N], t \in [T]\}$ are defined as a solution to
\begin{equation}\label{eq:tdkps}
\{\psi_n^{(t)}\} = \argmin_{z_{n}^{(t)} \in \mathbb{R}^d} \sum_{(t,n),(t',n')} \left( \|z_n^{(t)} - z_{n'}^{(t')}\| - D_{(t,n),(t',n')} \right)^2.
\end{equation}
This joint embedding captures which agents are similar at each timepoint and how each agent evolves over time.
When $D$ is a Euclidean distance matrix, Eq.~\eqref{eq:tdkps} can be solved efficiently in closed form via classical multidimensional scaling \citep{torgerson1952multidimensional}.

\subsection{The iso-mirror}

While TDKPS provides agent-level embeddings, monitoring large multi-agent systems benefits from a system-level summary.
The iso-mirror \citep{athreya2025euclidean} provides such a summary by embedding the system itself based on how the system state differs across time.

Given TDKPS embeddings $\{\psi_n^{(t)} : n \in [N]\}$ at each timepoint $t$, we construct a timepoint-by-timepoint distance matrix $\Delta \in \mathbb{R}^{T \times T}$ with entries
\begin{equation*}
\Delta_{t,t'} = \delta\left(\{\psi_n^{(t)}\}_{n=1}^N, \{\psi_n^{(t')}\}_{n=1}^N\right),
\end{equation*}
where $\delta(\cdot, \cdot)$ measures dissimilarity between the agent embedding distributions at times $t$ and $t'$.

The $c$-dimensional iso-mirror $\{\phi^{(t)} : t \in [T]\}$ is obtained by applying MDS to $\Delta$:
\begin{equation*}
\{\phi^{(t)}\} = \argmin_{z^{(t)} \in \mathbb{R}^c} \sum_{t, t'} \left( \|z^{(t)} - z^{(t')}\| - \Delta_{t,t'} \right)^2.
\end{equation*}
The iso-mirror trajectory $\{\phi^{(t)}\}_{t=1}^T$ provides a low-dimensional representation of system-level dynamics, where proximity in iso-mirror space indicates similar system states.
Depending on the choice of dissimilarity, the iso-mirror can be sensitive to changes in dispersion, shape, or higher-order structure of the agent distribution and not merely shifts in location.

\subsection{Multi-agent system control charts via the iso-mirror}
\label{sec:control-charts}
We apply the Shewhart control chart framework \citep{shewhart1931economic} to the iso-mirror trajectory to automate multi-agent system monitoring.
Shewhart control charts track a process by comparing observations to control limits derived from historical data. 
Observations falling outside these limits signal potential anomalies.

Let $S: \mathbb{R}^c \to \mathbb{R}$ be a univariate statistic defined on the iso-mirror.
Common choices include the the norm $S(\phi^{(t)}) = \|\phi^{(t)}\|_2$ and the norm of the first difference $S(\phi^{(t)}) = \|\phi^{(t)} - \phi^{(t-1)}\|_{2}$.

\paragraph{Establishing control limits.}
Control limits are estimated from a reference window $\mathcal{W}^{(t)} \subseteq \{1, \ldots, t-1\}$ of past observations presumed to be in-control.
From the reference observations $\{S(\phi^{(\tau)}) : \tau \in \mathcal{W}^{(t)}\}$, we estimate
\begin{equation*}
\hat{\mu}^{(t)} = \frac{1}{|\mathcal{W}^{(t)}|} \sum_{\tau \in \mathcal{W}^{(t)}} S(\phi^{(\tau)}), \qquad \hat{\sigma}^{(t)} = \sqrt{\frac{1}{|\mathcal{W}^{(t)}| - 1} \sum_{\tau \in \mathcal{W}^{(t)}} \|S(\phi^{(\tau)}) - \hat{\mu}^{(t)}\|_2^2}.
\end{equation*}
When $c' = 1$, control limits at time $t$ are
\begin{equation*}
\text{LCL}^{(t)} = \hat{\mu}^{(t)} - k \cdot \hat{\sigma}^{(t)} \qquad \text{and} \qquad \text{UCL}^{(t)} = \hat{\mu}^{(t)} + k \cdot \hat{\sigma}^{(t)},
\end{equation*}
and we signal an alarm if $S(\phi^{(t)}) \notin [\text{LCL}^{(t)}, \text{UCL}^{(t)}]$.
$k > 0$ controls the trade-off between sensitivity and false alarm rate.
Two choices for $\mathcal{W}^{(t)}$ yield qualitatively different monitoring regimes:
\begin{itemize}
    \item \emph{Fixed control}: $\mathcal{W}^{(t)} = \{1, \ldots, T_0\}$ for all $t > T_0$. Control limits are fixed based on an initial burn-in period and remain constant throughout monitoring.
    \item \emph{Adaptive control}: $\mathcal{W}^{(t)} = \{t - w, \ldots, t - 1\}$ for window size $w$. Control limits update at each timepoint, adapting to non-stationary baseline dynamics.
\end{itemize}
Fixed control is appropriate when the in-control distribution is stable. 
Adaptive control accommodates systems with natural drift.
Some statistical properties of control, including Type-I and Type-II error rates, are discussed in \S~\ref{sec:theory} and Appendix \ref{app:proofs}.
\section{Experiments}
\label{sec:experiments}

We simulate a multi-agent system to evaluate the proposed control chart methodology under controlled conditions where ground truth is known. 
We compute the iso-mirror via isomap \citep{tenenbaum2000global}applied to the Euclidean distance matrix defined on the perspectives in $ d = 2 $.

\paragraph{Queries and environment.}
Agents interact over a set of $M$ queries $Q = \{q_1, \ldots, q_M\}$, partitioned into two types based on their ground-truth dynamics.
Static queries ($M_s$)  are queries whose correct responses remain fixed throughout the simulation.
Environmental queries ($M_{\text{env}}$) are queries whose correct responses are functions of the environment and evolve over time, updating with probability $p_{\text{env}}$ per iteration.
The environmental queries reflect real-world settings where ground truth changes as the environment evolves (e.g. ``What is the weather today?", ``Is NVDA stock up?", etc.). 
This distinction motivates an agent architecture capable of updating its knowledge base even when an entry for a query already exists, as agents that treat their initial answers as immutable will become increasingly misaligned with the environment over time.
Unless otherwise stated, we set $M = 10$, $M_s = 8$, $M_{\text{env}} = 2$, and $p_{\text{env}} = 0.004$.

\paragraph{Agent architecture.}
Each agents' cognitive architecture consists of a large language model equipped with external vector database of memories.
At initialization, each agent's vector database is populated with $M_s$ static question-answer pairs and $M_{env}$ dynamic question-answer pairs. 
Every agent in the system is initialized with a disjoint set of question-answer pairs.
Database entries for dynamic question-answer pairs update as their ground truth evolves for the agents that were assigned them at initialization.
The only way that agents in the swarm can learn the current answers to dynamic questions they were not initialized with is to communicate with other agents.

At inference time, agents retrieve the $k=3$ most relevant question-answer pairs from their database before generating a response (see Appendix \ref{app:prompt-template} for prompt template and examples). 

Agents are instructed to be "grounded," in that they are instructed to respond "I don't know" if they do not have sufficient context to support an answer to a given query.
We use OpenAI's \texttt{gpt-4o-mini} \citep{openai2024gpt4o} as the base model unless otherwise stated.

\paragraph{Retrieval mechanism}
We populate each agent's vector database with question-answer pairs by vectorizing questions with \texttt{nomic-embed-text-v1.5} \cite{nussbaum2025nomicembedtrainingreproducible}.
Inspired by 
% the cognitive architecture in
\citet{park2023generative}, we consider both memory content and recency during retrieval.
In particular, we measure the relevance between a vectorized query $q$ and a vector database entry $k$ as $ \langle q, k \rangle + e^{(- \rho \Delta t_{k}) }$, where $\Delta t_{k}$ measures the time since the memory was inserted into the database, and $\rho$ controls the relative impact of query recency.
We refer to an agent with $\rho>0$ \textit{dynamic} and an agent with $\rho=0$ \textit{static}.

\paragraph{Interaction and update mechanisms.}
Agents are connected via a complete communication graph with uniform edge weights $w_{nn'} = 1/(N-1)$ for $n \neq n'$.
At each iteration $t$, agent $f_n^{(t)}$ selects another agent $f_{n'}^{(t)}$ to interact with by sampling from the edge weight distribution.

During an interaction, agent $f_n^{(t)}$ poses $K$ questions to agent $f_{n'}^{(t)}$, sampled from $Q$ with probabilities proportional to the time elapsed since each question was last asked,
$P(q_{nm}) = \exp(\ell_{nm})/\sum_{m'=1}^{M} \exp(\ell_{nm'})$,
where $\ell_m$ denotes the number of iterations since agent $n$ last asked question $m$.
This recency-weighted sampling encourages agents to seek updated information on questions they have not recently queried.
In the dynamic case, agent $f_n^{(t)}$ stores any substantive responses in its knowledge base by storing the question it asked paired with the answer its received.
In the static case, agent $f_n^{(t)}$ stores a substantive response only if it believes it does not already know the answer to the question.
Unless otherwise stated, we set $K = 5$.

\paragraph{Sleeper agents.}
To evaluate the sensitivity of our monitoring procedure to adversarial behavior, we introduce \emph{sleeper agents} into the system \citep{hubinger2024sleeper}.
% A sleeper agent behaves identically to a standard agent until a trigger time $t^*$, after which it begins providing incorrect information to other agents in the network.
Let $\mathcal{F}_{\text{sleeper}} \subseteq \mathcal{F}^{(t)}$ denote the set of sleeper agents and $N_{\text{sleeper}} = |\mathcal{F}_{\text{sleeper}}|$.
For $t < t^*$, sleeper agents participate normally in the communication network.
For $t \geq t^*$, when a sleeper agent
% $f_n^{(t)} \in \mathcal{F}_{\text{sleeper}}$
is queried by another agent, it responds with a canary phrase with probability $P_{defect}(t)$ rather than retrieving from its knowledge base.
We consider a single sleeper agent that defects either quickly or slowly by modulating the $P_{defect}(t)$ schedules.

\begin{figure}
    \centering
    \includegraphics[width=0.95\linewidth]{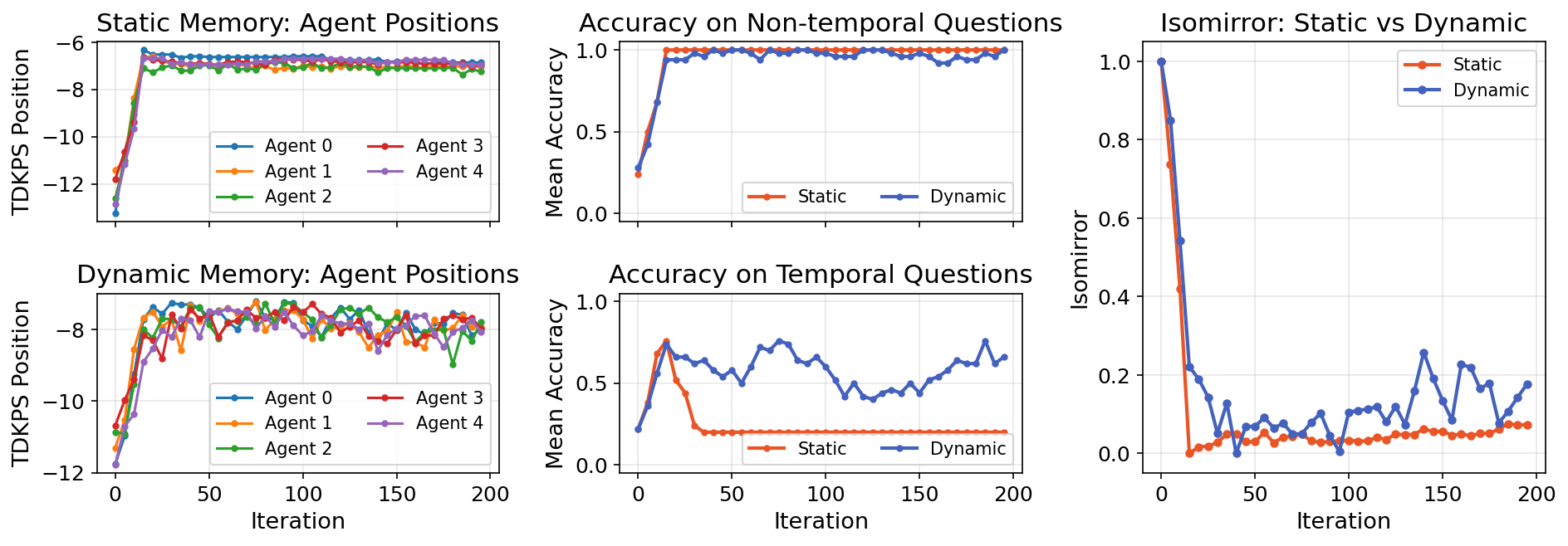}
        \caption{Static ($\rho = 0$) vs.\ dynamic ($\rho = 0.05$) memory. \textbf{Left}: TDKPS embeddings $\psi_n^{(t)}$ for individual agents. \textbf{Middle}: Mean accuracy on static and environmental queries. \textbf{Right}: Iso-mirror trajectories $\phi^{(t)}$. Dynamic memory is necessary for agents to adapt to a dynamic environment.}
    \label{fig:memory}
\end{figure}

\subsection{Results}

\paragraph{Static vs.\ dynamic memory.}
Figure~\ref{fig:memory} compares agents with static memory ($\rho = 0$) against agents with dynamic memory ($\rho = 0.05$).
The left panels show TDKPS embeddings $\psi_n^{(t)}$ for individual agents under each memory type.
With $\rho = 0$, agents converge rapidly to a fixed configuration and remain there indefinitely.
With $\rho = 0.05$, agents exhibit ongoing variation as they incorporate new information from interactions.

The middle panel reveals the effect of memory on accuracy.
On static queries, both memory types achieve near-perfect accuracy.
On environmental queries, agents with $\rho = 0$ quickly fall to $\sim$20\% accuracy (chance-level) as ground truth evolves and their knowledge bases become stale.
% Recall that the dynamic question-answer pairs each agent is initialized with update automatically, meaning that for a swarm of 5 agents, $20\%$ accuracy is the worst possible performance for this task.
Agents with $\rho = 0.05$ maintain substantially higher accuracy on environmental queries, though with greater variance as new information must propagate through the network.

The right panel shows the iso-mirror trajectories \{$\phi^{(t)}\}$.
The static-memory system collapses to a fixed point.
The dynamic-memory system maintains nontrivial variation, capturing ongoing adaptation to environmental change.
These results motivate $\rho > 0 $ for subsequent experiments.

\paragraph{Fixed vs.\ adaptive control.}
Figure~\ref{fig:control} compares fixed and adaptive control charts applied to the iso-mirror trajectory of a system with agents with $\rho = 0.05$.
% Static control charts compute the mean and variance using the sample mean and variance of the iso-mirror during a calibration period. 
% Dynamic control charts compute the mean and variance using the sample mean and variance from a rolling window of size 200.
The left panel shows fixed control with $\mathcal{W}^{(t)} = \{100, \ldots, 300\}$ for all $t > 300$.
% The right panels shows adaptive control with $ w = 200 $.
As the system evolves beyond the calibration window, $\phi^{(t)}$ drifts systematically below $\hat{\mu}^{(t)}$, producing repeated violations of $\text{LCL}^{(t)}$.
These violations reflect natural system evolution rather than anomalous behavior.

The right panel shows adaptive control with $\mathcal{W}^{(t)} = \{t - w, \ldots, t - 1\}$ and $w = 200$.
The control limits track the evolving baseline, and the iso-mirror trajectory remains largely within the control bands.
Adaptive control avoids the false alarms that plague fixed control in non-stationary settings.
Hence, adaptive control is necessary for controlling dynamic systems with reasonable Type-I error.

\paragraph{Sleeper agent detection.}
Figure~\ref{fig:sleeper} evaluates the control chart's ability to detect sleeper agent attacks.
We characterize different sleeper agents by the speed in which they transition from responding based on their database to responding with the canary phrase for $ t \ge t^{*} $ with probability $ p_{\max}$.
In our experiments, the defection schedule is controlled by a sigmoid of varying duration. 
% Under the \emph{immediate defection} schedule, sleeper agents respond to all queries with the canary phrase when $t \geq t^*$.
% Under the \emph{gradual defection} schedule, the probability of sleeper agents responding with the canary phrase a gradually ramping sigmoid.
Explicitly: 
$$P_{defect}(t) \;=\; p_{\max}\cdot \frac{\sigma\!\big(\kappa\,(2(t-t^*)/s - 1)\big) - \sigma(-\kappa)} {\sigma(\kappa) - \sigma(-\kappa)}\,\mathbf{1}_{[t^*,\,t^*+s)}(t)\;+\;p_{\max}\,\mathbf{1}_{[t^*+s,\infty)}(t)$$ 
where $\sigma(x) = (1+e^{-x})^{-1}$ is the standard logistic, $s$ is the attack duration over which the defection rate ramps up, $p_{\max} \in [0,1]$ is the asymptotic per-step probability of emitting an adversarial response after the ramp completes, and $\kappa > 0$ controls the steepness of the ramp.
% The renormalization by $\sigma(-\kappa)$ and $\sigma(\kappa)$ ensures $p(t^* + s) = p_{\max}$ exactly, rather than approaching it asymptotically.
The top-left panel shows defection with $t^* = 400$, duration $ s = 10 $, and $ p_{\max} = 0.5 $.
The iso-mirror $\phi^{(t)}$ exhibits a sharp shift, producing multiple consecutive violations of both the $k = 2$ and $k = 3$ control limits.
The attack is clearly detectable.
\begin{figure}[t]
    \centering
    \includegraphics[width=0.95\linewidth]{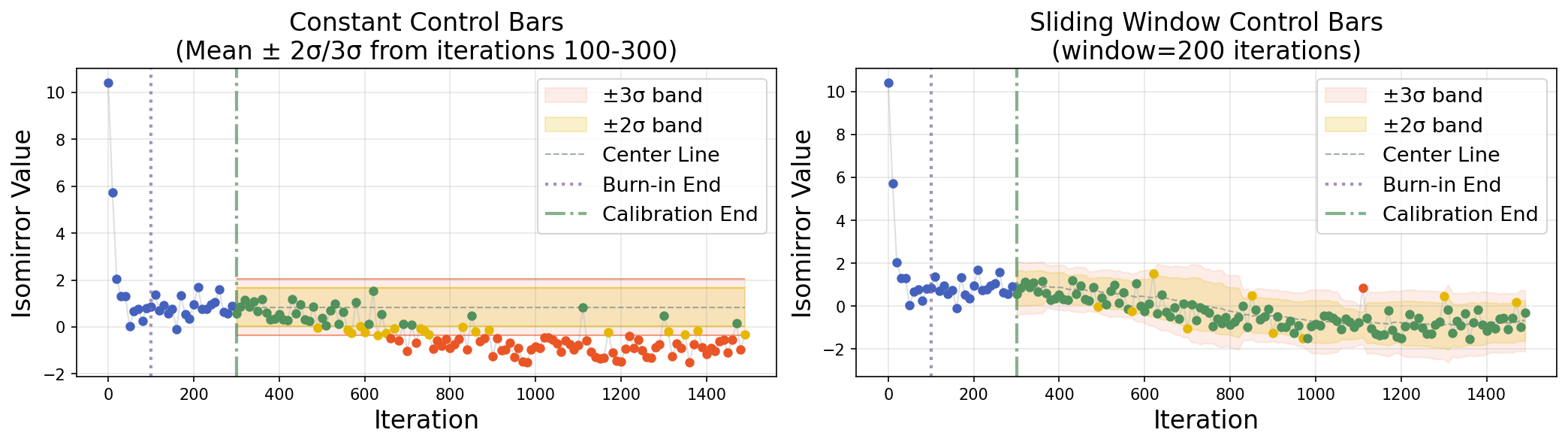}
    \caption{Fixed (\textbf{left}) vs.\ adaptive (\textbf{right}) control charts applied to the iso-mirror $\phi^{(t)}$. Shaded regions indicate $2\sigma$ (yellow) and $3\sigma$ (red) bands. Point colors indicate position relative to control limits: blue (burn-in), green (within $2\sigma$), yellow ($2\sigma$ to $3\sigma$), red (beyond $3\sigma$). Vertical lines mark burn-in end (dotted) and calibration end (dashed). Adaptive control avoids false alarms under natural system drift.}
   \label{fig:control}
\end{figure}

The top-right panel defection with $ t^{*} = 400 $ duration $ s = 800 $, and $ p_{\max} = 0.5 $. 
% where the probability of responding with the canary phrase ramps from 0\% to 50\% over iterations 400 to 1500.
The iso-mirror drifts upward, but the adaptive control limits drift with it.
The trajectory remains largely within the control bands, demonstrating that gradual adversarial influence can evade detection.

The bottom panels show the mean accuracy of non-adversarial agents under each attack.
Under immediate defection, accuracy drops sharply but partially recovers.
Under gradual defection, accuracy degrades slowly and persistently, settling at a lower level.
The gradual attack achieves long-term influence while avoiding detection.

\paragraph{Detectability analysis.}
In order to study the probability of detection under various adversarial conditions, we simulate 100 replications of 3 adversarial profiles: no adversary, a fast attack ($s= 100$), and a slow attack ($s= 800$).
Both attacks reach the same peak defection probability $ p_{\max} = 0.5 $ and start at $t^* = 400$.

We evaluate adaptive control charts with window size $w = 200$ at sensitivities $k \in \{2, 3, 5\}$.
For each setting, we report the empirical exceedance rate -- the fraction of replications in which the control chart signals an alarm -- at each iteration across the 100 replicates.\footnote{Due to the cost of running several hundred simulations with an LLM in the loop, we replace the LLM call with the proxy function detailed in Appendix~\ref{app:nollm-proxy}.
We show that the LLM-based simulations and proxy-based simulations are qualitatively similar in Figure \ref{fig:validity-of-proxy} of Appendix \ref{app:nollm-proxy}.
Similar proxies have been shown to approximate multi-agent systems faithfully \citep{wang2026gaussianmixturemodelsproxy}.}
Results are shown in Figure~\ref{fig:adversarial-detection-rates}.
Slower defection is results in lower exceedance rates across all sensitivities.  
Theoretical justification for the relative ordering of exceedance rates is provided in \S \ref{sec:theory}.

\begin{figure}
    \centering
    \includegraphics[width=0.95\linewidth]{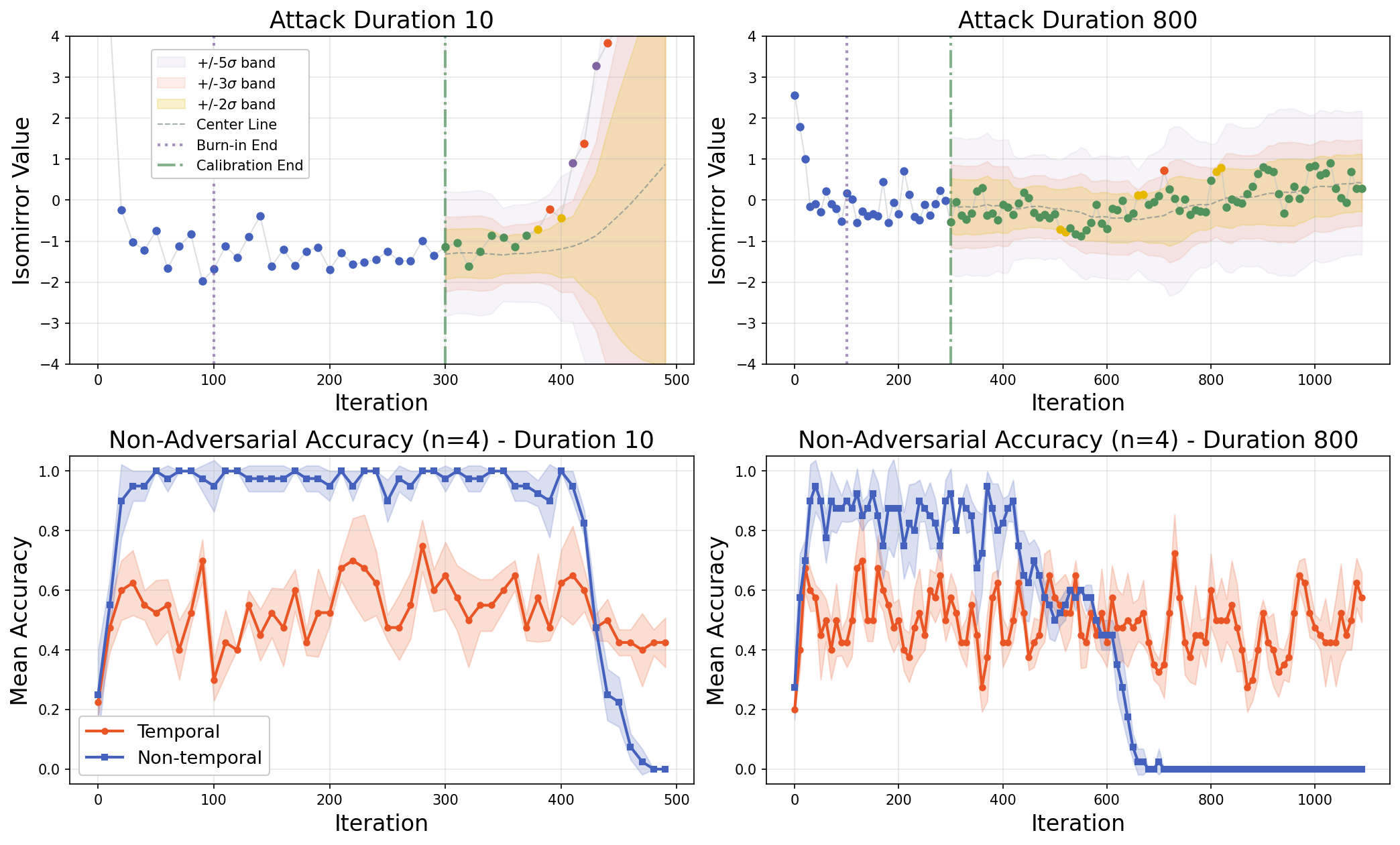}
    \caption{Sleeper agent detection under immediate (\textbf{left}) and gradual (\textbf{right}) defection. \textbf{Top}: Iso-mirror $\phi^{(t)}$ with adaptive control bands. \textbf{Bottom}: Mean accuracy of non-adversarial agents on static and environmental queries. Shading indicates $\pm 1$ standard error. Adaptive control allows gradual defection to evade detection and achieve a large long-term influence.}
 \label{fig:sleeper}
\end{figure}

\section{Theory}\label{sec:theory}

We formalize adaptive control on the iso-mirror trajectory and establish basic statistical properties.
We then show that for any adaptive control scheme there exists a sleeper agent strategy that can shift the iso-mirror, and thereby the perspectives of all agents, by an arbitrary amount while remaining undetected.
Standard control chart properties and all proofs are deferred to Appendix~\ref{app:proofs}.

\subsection{Setup and Assumptions}

We work with $S^{(t)} := S(\phi^{(t)})$ and state our assumptions on the monitoring process.

\begin{assumption}[Stationarity]\label{ass:stationarity}
Under in-control conditions, the sequence $\{S^{(t)}\}_{t \geq 1}$ is i.i.d.\ with mean $\mu_0$, variance $\sigma_0^2 > 0 $, and a continuous distribution function $F_0$.
\end{assumption}

\begin{assumption}[Sub-Gaussian tails]\label{ass:subgaussian}
There exists $\lambda > 0$ such that $\mathbb{E}[\exp(\lambda (S^{(t)} - \mu_0)^2)] < \infty$ under in-control conditions.
\end{assumption}

Recall that adaptive control with window size $w$ and threshold $k > 0$ signals an alarm at time $t$ when $S^{(t)} \notin [\mathrm{LCL}^{(t)}, \mathrm{UCL}^{(t)}]$.
Let $A^{(t)} \in \{0,1\}$ indicate whether an alarm is raised at time $t$.
We define the system to be \emph{in-control} at time $t$ if $S^{(t)}$ and $\{S^{(\tau)} : \tau \in \mathcal{W}^{(t)}\}$ are drawn from $F_0$, and \emph{out-of-control with shift $\delta$} if the window observations are drawn from $F_0$ while $S^{(t)}$ has mean $\mu_0 + \delta$.
The Type-I error rate is $\alpha^{(t)} := P(A^{(t)} = 1 \mid \text{in-control})$ and the power is $1 - \beta^{(t)}(\delta) := P(A^{(t)} = 1 \mid \text{shift } \delta)$.
% Under normality, $\alpha^{(t)}$ depends only on $w$ and $k$ (Proposition~\ref{prop:false-alarm}, Appendix~\ref{app:standard}).
The key structural property of adaptive control is that a persistent shift is eventually absorbed into the baseline.

\begin{lemma}[Detection window]\label{lem:window}
Suppose Assumption~\ref{ass:stationarity} holds. Let a shift of magnitude $\delta \neq 0$ occur at time $t^*$: for all $t \geq t^*$, $S^{(t)} \overset{\mathrm{i.i.d.}}{\sim} F_\delta$, where $F_\delta(\cdot) = F_0(\cdot - \delta)$. Then for all $t \geq t^* + w$,
\begin{equation*}
P(A^{(t)} = 1) = \alpha^{(t)}.
\end{equation*}
That is, adaptive control has power exceeding $\alpha^{(t)}$ only during $[t^*, t^* + w)$.
\end{lemma}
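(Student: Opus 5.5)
The plan is to show that for $t \geq t^* + w$ the joint law of the statistic and its reference window is a rigid translation by $\delta$ of the in-control joint law. I will then use the translation equivariance of the adaptive alarm rule to conclude that the alarm probability is unchanged.

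\emph{Step 1 (the window is entirely post-shift).} Fix $t \geq t^* + w$. The adaptive window is $\mathcal{W}^{(t)} = \{t-w,\ldots,t-1\}$, and its smallest index satisfies $t - w \geq t^*$. Hence every index in $\mathcal{W}^{(t)} \cup \{t\}$ is at least $t^*$. By hypothesis, the $w+1$ random variables $S^{(t-w)},\ldots,S^{(t)}$ are then i.i.d.\ with distribution $F_\delta$.

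\emph{Step 2 (coupling with the in-control sequence).} Let $U^{(\tau)} \overset{\mathrm{i.i.d.}}{\sim} F_0$, which is the in-control model of Assumption~\ref{ass:stationarity}. Since $F_\delta(x) = F_0(x-\delta)$, the vector $(S^{(t-w)},\ldots,S^{(t)})$ has the same joint law as $(U^{(t-w)}+\delta,\ldots,U^{(t)}+\delta)$. Independence is preserved under this coupling.

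\emph{Step 3 (equivariance of the alarm rule).} Write the alarm as a function $A^{(t)} = a(S^{(t-w)},\ldots,S^{(t-1)};S^{(t)})$, where $a = \mathbf{1}\{|S^{(t)} - \hat{\mu}^{(t)}| > k\hat{\sigma}^{(t)}\}$. Adding a constant $\delta$ to all $w+1$ arguments has two effects: $\hat{\mu}^{(t)}$ becomes $\hat{\mu}^{(t)}+\delta$, and $\hat{\sigma}^{(t)}$ is unchanged, because it is built from deviations from the window mean. So $S^{(t)}-\hat{\mu}^{(t)}$ is invariant, and therefore $a$ is invariant under common translation. Combining with Step 2 gives
\begin{equation*}
P(A^{(t)}=1) = P\bigl(a(U^{(t-w)}+\delta,\ldots,U^{(t)}+\delta)=1\bigr) = P\bigl(a(U^{(t-w)},\ldots,U^{(t)})=1\bigr) = \alpha^{(t)},
\end{equation*}
where the last equality is the definition of the Type-I error rate with all $w+1$ observations drawn from $F_0$.

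The contrapositive reading in the statement then follows. Power can exceed $\alpha^{(t)}$ only when the window contains at least one pre-shift observation, which means $t - w < t^*$, i.e.\ $t \in [t^*, t^*+w)$.

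The main point needing care is Step 3. I must check that the rule depends on the data only through the location-invariant quantities $S^{(t)}-\hat{\mu}^{(t)}$ and $\hat{\sigma}^{(t)}$; the $\|\cdot\|_2$ in $\hat{\sigma}^{(t)}$ is just an absolute value in the univariate case. I must also interpret $\alpha^{(t)}$ as the in-control alarm probability for the same window size $w$. Continuity of $F_0$ is not needed for the equality. It only ensures that boundary ties $S^{(t)} = \mathrm{UCL}^{(t)}$ have probability zero, so whether the alarm uses strict or non-strict inequalities does not matter.
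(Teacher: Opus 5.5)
Your proof is correct and follows the same route as the paper. For $t \ge t^*+w$, the window and the current observation are all i.i.d.\ $F_\delta$, which is a common translation by $\delta$ of in-control $F_0$ samples, and the alarm rule is invariant under that translation, so the alarm probability equals $\alpha^{(t)}$. Writing the rule as $|S^{(t)}-\hat{\mu}^{(t)}|>k\hat{\sigma}^{(t)}$ instead of through the ratio $(S^{(t)}-\hat{\mu}^{(t)})/\hat{\sigma}^{(t)}$ is slightly cleaner, because it avoids any concern about $\hat{\sigma}^{(t)}=0$.
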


This follows from the location-invariance of $(S^{(t)} - \hat{\mu}^{(t)})/\hat{\sigma}^{(t)}$: once the window lies entirely in the post-shift regime, the alarm criterion cannot distinguish the shifted process from the original.

\subsection{Undetectable Defection}

The detection window creates an attack surface: an adversary who shifts the system by a small amount and waits $w$ steps for the shift to be absorbed can repeat indefinitely, accumulating arbitrarily large total displacement.

\begin{definition}[Agenda]\label{def:agenda}
An \emph{agenda} is a vector $a = (a_1, \ldots, a_M) \in \mathcal{X}^M$ specifying target responses that a sleeper agent provides when defecting, with $a_m$ the response to query $q_m$.
\end{definition}

\begin{figure}
    \centering
    \includegraphics[width=0.95\linewidth]{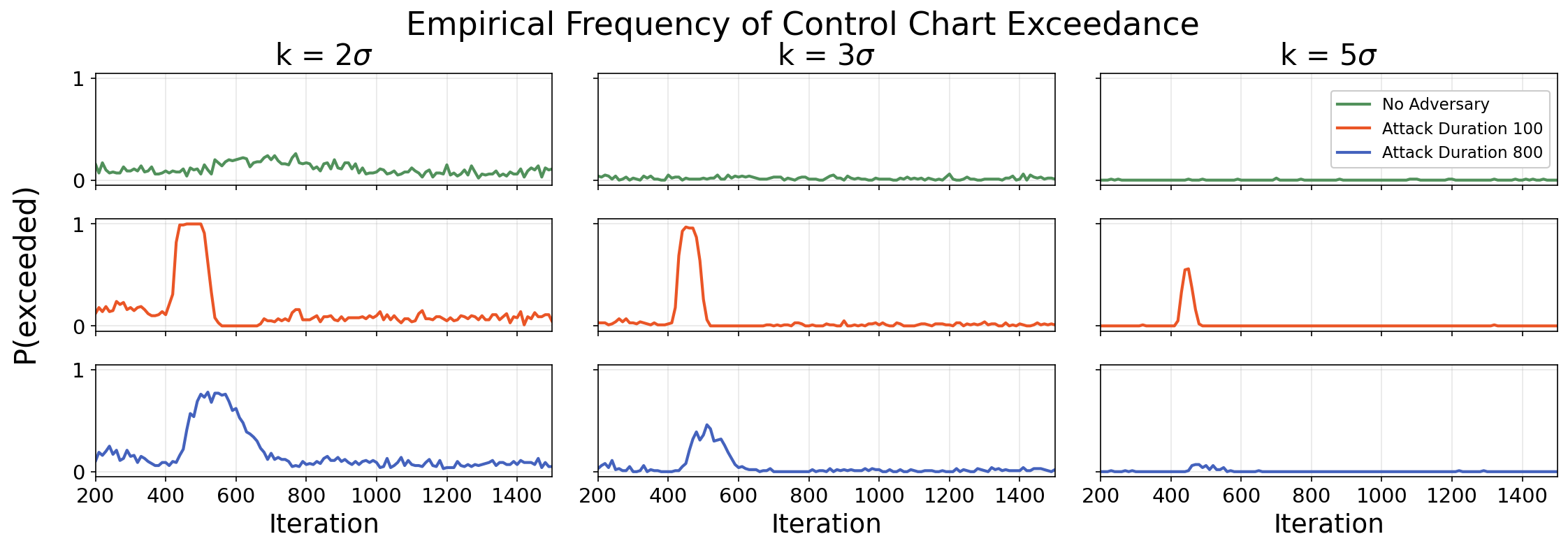}
    \caption{Empirical frequency of control chart exceedance across sensitivities ($k$, columns) and attack regimes (rows). Green: no adversary. Orange: fast attack (ramps to peak defection over 100 iterations). Blue: slow attack (ramps to peak defection over 800 iterations). Both attacks reach the same peak defection probability.
    All settings have $t^* = 400$ and $w = 200$. Fast attacks produce sharp exceedance spikes detectable across all sensitivities. The slow attack is nearly indistinguishable from the no-adversary baseline at $k = 5\sigma$, consistent with Theorem~\ref{thm:evasion}.}

\label{fig:adversarial-detection-rates}
\end{figure}
\begin{definition}[Defection schedule]\label{def:schedule}
A \emph{defection schedule} is a function $\pi: \mathbb{Z}_{>0} \to [0,1]$ specifying the probability of defection at each iteration. A sleeper agent with activation time $t^*$ satisfies $\pi(t) = 0$ for $t < t^*$.
\end{definition}
For a defection schedule $\pi$ and agenda $a$, let $\phi_\pi^{(t)}$ denote the iso-mirror at time $t$ under schedule $\pi$, and let $\phi_0^{(t)}$ denote the iso-mirror under no defection. We write $\Delta_\pi^{(t)} := \|\phi_\pi^{(t)} - \phi_0^{(t)}\|_2$ for the iso-mirror impact at time $t$.
We require two regularity conditions linking defection to the iso-mirror.

\begin{assumption}[Sensitivity]\label{ass:sensitivity}
There exist constants $C_1, C_2 > 0$ such that for any defection schedule with constant probability $p$ over a sufficiently long interval, the expected iso-mirror displacement satisfies $C_1 \cdot p \leq \mathbb{E}[\Delta_\pi^{(t)}] \leq C_2 \cdot p$.
\end{assumption}

\begin{assumption}[Concentration]\label{ass:concentration}
The per-step iso-mirror increment induced by defection at probability $p$ is sub-Gaussian with parameter $\sigma_a > 0$.
\end{assumption}

Both assumptions are mild when the TDKPS and iso-mirror embeddings are Lipschitz functions of the distance matrix and $N$, $M$ are fixed.
We partition the attack into $L$ epochs of length $w$. In a given epoch the adversary defects at a fixed probability.
\begin{definition}[Undetected defection]\label{def:undetected}
Given window size $w$, threshold $k$, and target displacement $\delta > 0$, an adversary achieves \emph{$(\delta, \eta)$-undetected defection} by time $T$ if:
\begin{enumerate}
    \item $P(\Delta_\pi^{(T)} \geq \delta) \geq 1 - \eta$, and
    \item $P(A^{(t)} = 1) \leq \alpha^{(t)} + \epsilon_L$ for all $t \in [t^*, T]$, where $\epsilon_L \to 0$ as $L \to \infty$.
\end{enumerate}
An \emph{optimal defector} minimizes $T - t^*$ subject to these constraints.
\end{definition}

\begin{theorem}[Existence of undetected defection]\label{thm:evasion}
Suppose Assumptions~\ref{ass:stationarity}--\ref{ass:concentration} hold. For any agenda $a \in \mathcal{X}^M$, target displacement $\delta > 0$, and $\eta > 0$, there exists a defection schedule $\pi$ that achieves $(\delta, \eta)$-undetected defection by time $T = t^* + Lw$ for some finite $L$.
\end{theorem}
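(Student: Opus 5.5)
We construct a staircase defection schedule. Divide $[t^*, t^* + Lw)$ into $L$ epochs of length $w$ and set $\pi(t) = p_\ell := \ell\, p_{\max}/L$ for $t$ in the $\ell$-th epoch $[t^* + (\ell-1)w,\, t^* + \ell w)$. Here $p_{\max} \in (0,1]$ is fixed with $C_1 p_{\max} > \delta$, and if needed we rescale the target so that $\delta < C_1$. Each step then raises the defection probability by only $p_{\max}/L$. By the upper bound of Assumption~\ref{ass:sensitivity}, this perturbs the mean of $S^{(t)}$ by at most $O(C_2 p_{\max}/L)$, using that $S$ is Lipschitz in $\phi$ (true for both the norm and the first-difference choices). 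The plan has two parts: show that displacement accumulates to reach $\delta$, and show that each epoch boundary costs only a vanishing excess alarm probability.

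\textbf{Step 1 (displacement).} In the final epoch the schedule is constant at $p_L = p_{\max}$ over an interval of length $w$, taken to be ``sufficiently long.'' Assumption~\ref{ass:sensitivity} then gives $\mathbb{E}[\Delta_\pi^{(T)}] \ge C_1 p_{\max} > \delta$. To turn this into a high-probability statement, we use Assumption~\ref{ass:concentration}. The displacement is a sum of sub-Gaussian increments, so a Hoeffding/Chernoff-type bound gives
\[
P\bigl(\Delta_\pi^{(T)} < \delta\bigr) \le \exp\!\bigl(-(C_1 p_{\max} - \delta)^2 / (2 \sigma_{\mathrm{eff}}^2)\bigr).
\]
If this is not below $\eta$ for the chosen $p_{\max}$, we extend the final constant epoch, or aim the staircase at a larger displacement. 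Ideally the fluctuation averages out over the window, so $\sigma_{\mathrm{eff}}$ shrinks and the bound falls below $\eta$.

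\textbf{Step 2 (alarm control).} Within epoch $\ell \ge 2$, the reference window $\mathcal{W}^{(t)}$ straddles at most two consecutive probability levels, $p_{\ell-1}$ and $p_\ell$. We model the effect of defection as a location shift of $F_0$, in line with the location-shift formulation of Lemma~\ref{lem:window}. Then, conditioned on the level structure, the current observation and the window are drawn from $F_0$ shifted by amounts differing by at most $\delta_L := C_2 p_{\max}/L$.

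By the location invariance used in Lemma~\ref{lem:window}, we can subtract the common shift. What remains is the in-control statistic perturbed by at most $\delta_L$ in the numerator of $(S^{(t)} - \hat\mu^{(t)})/\hat\sigma^{(t)}$, and the window variance is inflated by at most $O(\delta_L^2)$. We bound
\[
P(A^{(t)}=1) - \alpha^{(t)} \le P\bigl(|Z_t| \in [k - c\delta_L/\hat\sigma^{(t)},\, k]\bigr) + P\bigl(\hat\sigma^{(t)} < \sigma_0/2\bigr),
\]
where $Z_t$ is the in-control standardized statistic. The first term vanishes as $\delta_L \to 0$ because $F_0$ is continuous (Assumption~\ref{ass:stationarity}), which makes the law of $Z_t$ continuous. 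The second term is small by sub-Gaussian concentration of the sample variance (Assumption~\ref{ass:subgaussian}); it is a fixed quantity depending only on $w$, and we absorb it into $\alpha^{(t)}$ or control it by taking $w$ large. We set $\epsilon_L$ to be the supremum of the first term, which tends to $0$ as $L \to \infty$.

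\textbf{Main obstacle.} The hard part is the rigorous coupling between the defection probability and the distribution of $S^{(t)}$. Assumption~\ref{ass:sensitivity} controls only expected displacement, and only for constant schedules after a long interval, whereas Step 2 needs distribution-level closeness during transitions. First, I would strengthen the reading of Assumption~\ref{ass:sensitivity} so that the law of $S^{(t)}$ under level $p$ is approximately $F_0(\cdot - \theta(p))$ with $\theta$ being $C_2$-Lipschitz in $p$. Second, I would use Assumption~\ref{ass:concentration} to bound the transient deviations within each epoch boundary. With these in hand, the continuity argument above yields $\epsilon_L \to 0$, and $L$ is fixed large enough to meet any prescribed tolerance, giving $T = t^* + Lw$ finite.
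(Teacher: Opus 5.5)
\textbf{The gap: your construction cannot reach an arbitrary $\delta$.} Your staircase assumes displacement tracks the \emph{current} defection level. The process ends at level $p_{\max}\le 1$ on a long constant phase, so the upper bound in Assumption~\ref{ass:sensitivity} gives $\mathbb{E}[\Delta_\pi^{(T)}]\le C_2 p_{\max}\le C_2$. Your own requirement $C_1 p_{\max}>\delta$ already forces $\delta<C_1$. Since $\Delta_\pi^{(T)}\ge 0$, the condition $P(\Delta_\pi^{(T)}\ge\delta)\ge 1-\eta$ implies $\mathbb{E}[\Delta_\pi^{(T)}]\ge(1-\eta)\delta$. So for $\delta>C_2/(1-\eta)$ your schedule provably fails. ``Rescaling the target'' is not allowed, because the statement fixes $\delta$.

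The missing idea is \emph{accumulation}. The paper keeps the defection probability at the same vanishing value $p_L=\delta/(LC_1)$ in every one of the $L$ epochs. Lemma~\ref{lem:window} absorbs each epoch's shift into the monitoring baseline, but the displacement of the actual iso-mirror persists. Each epoch therefore adds another $C_1p_L$, and the total $LC_1p_L=\delta$ is reached however large $\delta$ is. Meanwhile the monitor sees only an $O(p_L)\to 0$ mean shift at each epoch boundary, which gives $\epsilon_L\to 0$. This applies Assumption~\ref{ass:sensitivity} per epoch and cumulatively, which is stronger than its literal wording; the literal upper bound $C_2p$ is exactly the cap you hit. To follow this route you must adopt that reading explicitly. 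Conversely, your staircase with $p_{\max}$ fixed would fail under the cumulative reading: its top epochs each add $\Theta(p_{\max})$ of new shift, which does not vanish as $L\to\infty$.

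\textbf{Secondary issues.} First, your concentration step is not closed. With a fixed margin $C_1p_{\max}-\delta$, your tail bound falls below $\eta$ only if $\sigma_{\mathrm{eff}}$ shrinks, and you give no mechanism for that. Lengthening a constant phase does not help if displacement tracks the level. In the paper, concentration comes from the same accumulation: $\Delta_\pi^{(T)}$ is treated as a sum of $L$ independent epoch increments of scale proportional to $p_L$. Hence $\mathrm{Var}(\Delta_\pi^{(T)})=O(Lp_L^2\sigma_a^2)=O(\delta^2\sigma_a^2/(LC_1^2))\to 0$, and Chebyshev applies. Your instinct to keep a strict margin is worth importing there, e.g.\ $p_L=(1+\epsilon)\delta/(LC_1)$. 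A mean of exactly $\delta$ with vanishing variance does not by itself give $P(\Delta_\pi^{(T)}\ge\delta)\ge 1-\eta$.

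Second, in your alarm bound the term $P(\hat\sigma^{(t)}<\sigma_0/2)$ can neither be absorbed into $\alpha^{(t)}$ nor removed by enlarging $w$, since the monitor fixes $w$. The split is unnecessary, though. Because $F_0$ is continuous, $\hat\sigma^{(t)}>0$ almost surely and $P(|Z_t|=k)=0$. So $P(k-c\delta_L/\hat\sigma^{(t)}\le|Z_t|\le k)\to 0$ as $\delta_L\to 0$ by continuity of measure. With that fix, your Step 2 is a more careful version of the paper's alarm-control step.
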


The proof (Appendix~\ref{app:proof-evasion}) constructs an explicit schedule that defects at probability $p_L = \delta/(LC_1)$ across $L$ epochs of length $w$. By Lemma~\ref{lem:window}, each epoch's shift is absorbed into the baseline before the next begins, so the per-epoch alarm excess $\epsilon_L \to 0$ as $L \to \infty$. Meanwhile, the actual iso-mirror displacement accumulates to $\delta$ and concentrates by Assumption~\ref{ass:concentration}.

% \begin{corollary}[Speed-stealth tradeoff]\label{cor:speed-stealth}
% The minimum attack time scales as $\Theta(\delta \cdot w / (\sigma_0 \cdot \gamma^{-1}(\epsilon)))$, where $\gamma$ is the continuous function relating per-epoch shift size to alarm excess and $\epsilon$ is the prescribed alarm tolerance.
% \end{corollary}

Since the iso-mirror is derived from the TDKPS embeddings of \emph{all} agents, displacing it implies that non-adversarial agents' perspectives shift correspondingly.

\begin{corollary}[Perspective contamination]\label{cor:contamination}
Under the conditions of Theorem~\ref{thm:evasion}, for any $\delta > 0$ and $\eta > 0$, there exists a defection schedule such that, with probability at least $1 - \eta$,
\begin{equation*}
\left\| \frac{1}{N_{\mathrm{honest}}} \sum_{n \notin \mathcal{F}_{\mathrm{sleeper}}} \psi_n^{(T)} - \frac{1}{N_{\mathrm{honest}}} \sum_{n \notin \mathcal{F}_{\mathrm{sleeper}}} \psi_n^{(t^*)} \right\|_2 \geq h(\delta),
\end{equation*}
where $h: \mathbb{R}_{>0} \to \mathbb{R}_{>0}$ is strictly increasing and $N_{\mathrm{honest}} = N - N_{\mathrm{sleeper}}$.
\end{corollary}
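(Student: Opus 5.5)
The plan is to reduce Corollary~\ref{cor:contamination} to Theorem~\ref{thm:evasion} by relating the iso-mirror displacement to the displacement of the honest agents' mean TDKPS embedding. Fix $\delta>0$ and $\eta>0$, and apply Theorem~\ref{thm:evasion} with the same $\eta$. This gives a schedule $\pi$ with $P(\Delta_\pi^{(T)}\ge\delta)\ge 1-\eta$ and only negligible alarm excess. Everything else is deterministic: on the event $\{\Delta_\pi^{(T)}\ge\delta\}$ I will show the honest centroid has moved by at least $h(\delta)$.

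\textbf{Step 1: from iso-mirror movement to movement of the agent configuration.} The iso-mirror is MDS (or isomap) applied to $\Delta$, and $\Delta_{t,t'}=\delta(\{\psi_n^{(t)}\},\{\psi_n^{(t')}\})$. I will take $\delta(\cdot,\cdot)$ to be Lipschitz with respect to matched configurations, e.g.\ $\delta\le \max_n\|\psi_n^{(t)}-\psi_n^{(t')}\|$ or a Wasserstein-type distance. I will use the same Lipschitz property of the embedding that the paper invokes to justify Assumptions~\ref{ass:sensitivity}--\ref{ass:concentration}. Then $\Delta_\pi^{(T)}\ge\delta$ forces the defected and undefected configurations at time $T$ to differ, in the dissimilarity, by at least $\delta/L_\phi$, where $L_\phi$ is the Lipschitz constant of the iso-mirror map. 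This displacement is relative to the no-defection baseline at $T$. By stationarity (Assumption~\ref{ass:stationarity}), the baseline configuration at $T$ and at $t^*$ agree up to in-control fluctuation. Their gap is absorbed either by replacing $\delta$ with a larger target $\delta+c$ in Theorem~\ref{thm:evasion}, or by a union bound that costs an extra $\eta/2$.

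\textbf{Step 2: from configuration movement to movement of the honest centroid.} I split the configuration into sleeper and honest agents. The sleeper agents' embeddings lie in a bounded set because $N$ and $M$ are fixed and responses are embedded by $g$, so their total contribution to $\delta$ is bounded by a constant $B$. Any displacement beyond $B$ must therefore come from the honest agents. This is the main obstacle: movement of the honest agents need not mean movement of their mean. The dissimilarity could register a change in dispersion while the centroid stays fixed.

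My first attempt is to exploit the agenda. Defection pushes the honest agents' stored answers toward the common target $a$, so every honest $\bar X_n^{(t)}$ moves in a common direction. I will argue that the honest agents' displacement vectors have a nonvanishing common component, which gives $\|\text{centroid shift}\|\ge c\cdot(\text{average shift})$ for some $c>0$.

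If that argument fails, the fallback is to add a mild assumption that $\delta$ is location-dominated, or to choose $\delta(\cdot,\cdot)$ as the distance between means. Under that choice the centroid bound is immediate.

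\textbf{Step 3: define $h$.} Combining Steps 1 and 2 gives $h(\delta)=c\,(\delta/L_\phi - B')_+$, which is increasing but not strictly positive for small $\delta$. To make $h$ strictly increasing and positive, I instead run Theorem~\ref{thm:evasion} with the target $\delta' = \delta + L_\phi B' + \delta$ and set $h(\delta)=c\,\delta/L_\phi$. This choice is strictly increasing. The theorem holds for every target displacement, so inflating the target only increases $L$ and $T$.
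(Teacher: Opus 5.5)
Your skeleton matches the paper's: apply Theorem~\ref{thm:evasion} with the same $\eta$, work on the event $\{\Delta_\pi^{(T)}\ge\delta\}$, and separate the sleepers from the honest agents. The gap is in how you dispose of the sleepers. You bound their contribution by an additive constant $B$ obtained from boundedness of the embeddings, and then inflate the target to $2\delta+L_\phi B'$.

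That same boundedness, together with your Lipschitz assumption on the iso-mirror map, also bounds $\Delta_\pi^{(T)}$ and bounds the honest-centroid shift by the diameter. So $h(\delta)=c\,\delta/L_\phi$ cannot hold for large $\delta$. The inflation step invokes Theorem~\ref{thm:evasion} exactly for targets your own premises make unattainable; Assumption~\ref{ass:sensitivity} likewise gives $\mathbb{E}[\Delta_\pi^{(t)}]\le C_2$ for constant-probability schedules.

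Nor is $B'$ small. A defecting sleeper's own response matrix acquires canary rows at rate $\pi(t)$, so its own embedding moves directly. If $\delta(\cdot,\cdot)$ is the max-type quantity you mention, a single sleeper can change $\Delta_{t,t'}$ by as much as the full diameter while the honest agents' responses are unchanged. Nothing you wrote shows that any attainable displacement exceeds $L_\phi B'$.

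The missing idea is a \emph{relative} bound: the sleepers can account for only a fraction of the displacement, so the honest agents must carry a definite remainder. The paper gets this from the weighted decomposition $\bar\psi^{(t)}=\frac{N_{\mathrm{honest}}}{N}\bar\psi^{(t)}_{\mathrm{honest}}+\frac{N_{\mathrm{sleeper}}}{N}\bar\psi^{(t)}_{\mathrm{sleeper}}$, which dilutes the sleepers by $N_{\mathrm{sleeper}}/N$; hence its $h$ decreases in that ratio. This needs an averaging, centroid-type dissimilarity rather than a max, and the paper itself does not quantify how large the displacement must be.

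Three other steps are asserted rather than proved.
\begin{enumerate}
\item The location-versus-dispersion problem you identify stays open, and the agenda argument is unsupported. Honest agents absorb the canary only on the queries they happen to put to the sleeper, at different times. A drift of $\bar X_n^{(t)}$ toward $a$ in response space does not produce a common displacement of the $\psi_n^{(t)}$ in jointly re-solved MDS coordinates. Your fallback, a mean-sensitive $\delta(\cdot,\cdot)$, is essentially what the paper's proof assumes tacitly when it passes from ``the embedding distribution shifts'' to the centroid decomposition. It is acceptable only as an explicitly stated restriction on the dissimilarity.
\item In Step~1, Assumption~\ref{ass:stationarity} concerns the scalar $S^{(t)}$ only and says nothing about the configurations $\{\psi_n^{(t)}\}$. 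The no-defection baseline is expected to drift, which is the whole reason for adaptive control. Since $T-t^*=Lw$ grows as you inflate the target, the gap between the baseline at $T$ and at $t^*$ cannot be absorbed into a constant fixed in advance.
\item A Lipschitz bound for the map from the whole matrix $\Delta$ to $\phi^{(T)}$ shows only that the two runs' configurations differ somewhere in $[t^*,T]$. It does not show that the time-$T$ configurations differ by $\delta/L_\phi$.
\end{enumerate}
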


Theorem~\ref{thm:evasion} and Corollary~\ref{cor:contamination} suggest a  vulnerability of adaptive control: the mechanism enabling adaptation to legitimate change creates an attack surface for adversarial drift.
We conclude by formalizing a fundamental tradeoff for adaptive control in multi-agent systems:
% \subsection{The Fundamental Tradeoff}

\begin{theorem}[Learning-security tradeoff]\label{thm:tradeoff}
Consider a multi-agent monitoring system with the following two desiderata:
\begin{enumerate}
    \item \textbf{Adaptivity}: The procedure does not raise false alarms at an increased rate when agents update in response to genuine environmental change.
    \item \textbf{Security}: The procedure detects adversarial drift.
\end{enumerate}
Adaptive control satisfies (1) but fails (2) by Theorem~\ref{thm:evasion}. Static control satisfies (2) but fails (1): any legitimate shift exceeding $k\sigma_0$ triggers a false alarm with probability approaching~1.
\end{theorem}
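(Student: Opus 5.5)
The plan is to verify the four claims of Theorem~\ref{thm:tradeoff} separately, treating ``genuine environmental change'' as a persistent location shift of $S^{(t)}$. That is, after some time $t_e$ the observations are i.i.d.\ $F_\Delta(\cdot)=F_0(\cdot-\Delta)$, as in Lemma~\ref{lem:window}. ``Adversarial drift'' will mean the defection schedules covered by Theorem~\ref{thm:evasion}. Under this formalization, adaptivity becomes the statement $\limsup_t P(A^{(t)}=1)\le \alpha$ after a legitimate shift. Security becomes the statement that every schedule achieving displacement $\delta$ raises the alarm probability above $\alpha^{(t)}$ by a non-vanishing amount at some time.

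\textbf{Adaptive control, (1) holds.} Invoke Lemma~\ref{lem:window} directly. For $t\ge t_e+w$ the window and $S^{(t)}$ are both drawn i.i.d.\ from $F_\Delta$. The statistic $(S^{(t)}-\hat\mu^{(t)})/\hat\sigma^{(t)}$ is location invariant, so its law equals the in-control law and $P(A^{(t)}=1)=\alpha^{(t)}$. The excess is therefore confined to the finite transient $[t_e,t_e+w)$, and the long-run false-alarm rate is not increased. I would state this as the precise meaning of (1), since an unavoidable transient exists for any non-clairvoyant procedure.

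\textbf{Adaptive control, (2) fails.} Take $\delta$ and $\eta$ arbitrary and apply Theorem~\ref{thm:evasion}. It yields a schedule with $P(\Delta_\pi^{(T)}\ge\delta)\ge1-\eta$ and alarm probability at most $\alpha^{(t)}+\epsilon_L$ throughout, with $\epsilon_L\to0$. So no positive detection margin holds uniformly over adversaries, which contradicts (2).

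\textbf{Static control, (2) holds.} The limits are now built from the fixed window $\{1,\dots,T_0\}$, so they are fixed random variables $\hat\mu,\hat\sigma$ that converge to $\mu_0,\sigma_0$ as $T_0$ grows. Any drift that eventually places $\mathbb{E}S^{(t)}$ beyond $\mu_0\pm k\sigma_0$ by a margin is detected with probability bounded below. This covers drift of size $\delta$ for large enough $\delta$, via Assumption~\ref{ass:sensitivity} and the fact that $S$ is a norm.

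\textbf{Static control, (1) fails.} Take a legitimate shift with $|\Delta|>k\sigma_0$. Then
\[
P(A^{(t)}=1)\ge P\bigl(S^{(t)}>\hat\mu+k\hat\sigma\bigr)=P\bigl(X>\hat\mu-\mu_0-\Delta+k\hat\sigma\bigr),\qquad X\sim F_0-\mu_0 .
\]
I would use Assumption~\ref{ass:subgaussian} to get concentration of $\hat\mu$ and $\hat\sigma$. Then, taking $\Delta$ large, or $\Delta\to\infty$ relative to $k\sigma_0$, or reading ``probability approaching 1'' as $|\Delta|-k\sigma_0\gg\sigma_0$, the threshold $\hat\mu-\mu_0-\Delta+k\hat\sigma$ tends to $-\infty$ in probability. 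Continuity of $F_0$ then gives $P(A^{(t)}=1)\to1$.

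\textbf{Main obstacle.} The main obstacle is this last step, because a shift only slightly exceeding $k\sigma_0$ gives alarm probability near $1/2$, not near 1. I would therefore state the claim as an asymptotic one: the alarm probability tends to 1 as $(|\Delta|-k\sigma_0)/\sigma_0\to\infty$, or at least exceeds $\alpha$ by a fixed amount uniformly in $t$. A secondary subtlety is making ``detects adversarial drift'' precise enough that Theorem~\ref{thm:evasion} refutes it for adaptive control while the static chart satisfies it. I would define it as a lower bound $\sup_t P(A^{(t)}=1)\ge \alpha+c$ for some fixed $c>0$ whenever the displacement reaches $\delta$.
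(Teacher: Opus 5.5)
Your `adaptive control fails (2)' step is the paper's argument: a direct appeal to Theorem~\ref{thm:evasion}. Under your quantitative reading of security, you just choose $L$ with $\epsilon_L<c$. Your use of Lemma~\ref{lem:window} for `adaptive control satisfies (1)' supplies a justification that the paper's proof leaves implicit.

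For `static control fails (1)' you take a different route, and yours is the one that holds up. The paper fixes $\delta_{\mathrm{env}}$ and lets the burn-in length $T_0\to\infty$, so that $\hat\mu_0\to\mu_0$ and $\hat\sigma_0\to\sigma_0$. It then concludes that the alarm probability tends to $1$. That limit only removes estimation error. Since $S^{(t)}$ is independent of the burn-in with law $F_0(\cdot-\delta_{\mathrm{env}})$, the alarm probability converges to $P(|X+\delta_{\mathrm{env}}|>k\sigma_0)$ with $X\sim F_0-\mu_0$. For $\delta_{\mathrm{env}}$ just above $k\sigma_0$ this is roughly $P(X>0)$, about one half for symmetric $F_0$. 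That is exactly the obstacle you flag. Sending $(|\delta_{\mathrm{env}}|-k\sigma_0)/\sigma_0\to\infty$, as you propose, is what actually yields probability tending to $1$, and it suffices to refute (1). You need neither Assumption~\ref{ass:subgaussian} nor continuity of $F_0$ here. With $T_0$ fixed, $\hat\mu-\mu_0+k\hat\sigma$ is an a.s.\ finite random variable independent of $S^{(t)}$, so the threshold tends to $-\infty$ almost surely and dominated convergence finishes.

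The genuine gap is in `static control satisfies (2)', which the paper's proof does not address at all. Your step from Assumption~\ref{ass:sensitivity} to a shift in $S^{(t)}$ `via the fact that $S$ is a norm' runs the wrong way. The assumption controls $\Delta_\pi^{(t)}=\|\phi_\pi^{(t)}-\phi_0^{(t)}\|_2$, but the reverse triangle inequality only gives the upper bound $|S(\phi_\pi^{(t)})-S(\phi_0^{(t)})|\le\Delta_\pi^{(t)}$. Two cases show the step can fail:
\begin{itemize}
\item A displacement tangential to the sphere of radius $\|\phi_0^{(t)}\|_2$ moves the iso-mirror without moving $S$. When $c=1$, the reflection $\phi\mapsto-\phi$ does the same.
\item For the first-difference statistic $\|\phi^{(t)}-\phi^{(t-1)}\|_2$, a slow adversary leaves the law of $S^{(t)}$ essentially at its in-control level. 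A static chart on that statistic is then exactly as blind as the adaptive one.
\end{itemize}
Closing this needs a hypothesis not implied by Assumptions~\ref{ass:stationarity}--\ref{ass:concentration}. $S$ must be a level statistic with a lower sensitivity bound along the attack, e.g.\ $|S(\phi_\pi^{(t)})-S(\phi_0^{(t)})|\ge c_S\,\Delta_\pi^{(t)}$ for some $c_S>0$. With it, on $\{\Delta_\pi^{(t)}\ge\delta\}$ you get $|S^{(t)}-\mu_0|\ge c_S\delta-|S(\phi_0^{(t)})-\mu_0|$. The alarm probability is then bounded away from $\alpha$ once $c_S\delta$ is large compared with $k\sigma_0$ and $\eta$ is small. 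Without such a hypothesis, (2) for static control does not follow from the stated assumptions.
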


We conjecture that, under appropriate regularity conditions, no monitoring procedure can simultaneously satisfy both desiderata against all adversaries.

% Informally: either agents in a system cannot learn from their environment, or the system is susceptible to adversaries that defect sufficiently slowly.
\section{Discussion}

We introduced adaptive control charts for monitoring multi-agent systems via the iso-mirror and demonstrated a fundamental tradeoff: adaptive control is necessary for monitoring systems whose agents learn from a dynamic environment, yet any adaptive scheme is vulnerable to adversaries that defect sufficiently slowly (Theorem~\ref{thm:tradeoff}). This tradeoff is not an artifact of our particular pipeline -- it arises from the location-invariance of any monitoring statistic that re-estimates its baseline from recent data. Any system that adapts to legitimate change will, by the same mechanism, adapt to adversarial change that arrives slowly enough.

\paragraph{Implications for deployed systems.}
Our results suggest that monitoring multi-agent systems requires more than a single control chart. In practice, one might combine adaptive control (to handle environmental drift) with complementary detection mechanisms that are sensitive to cumulative displacement rather than instantaneous deviation (e.g., CUSUM-type statistics \citep{page1954continuous}) applied to the iso-mirror trajectory. 

More broadly, we conjecture that the learning-security tradeoff reflects an adversarial game: for any fixed monitoring procedure, there exists an adversarial strategy that evades it, and conversely, for any fixed adversarial strategy, there exists a monitoring procedure that detects it. Theorem \ref{thm:evasion} establishes one direction of this conjecture for adaptive control.
The reverse direction (that any specific slow-defection schedule can be caught 
by a suitably designed detector) is not yet formalized.

\paragraph{Limitations and future work.}
Several aspects of our setup are simplified relative to real deployments.
\textit{(i) Complete communication graphs.} Our simulations use a complete graph with uniform edge weights. Real systems have heterogeneous, sparse, and evolving topologies, whose effect on adversarial propagation speed and detection has not been explored.
\textit{(ii) Simple adversary.} We do not address coordinated adversaries or adversaries that adapt to the monitoring system.
\textit{(iii) Simplified agents.} Our agents use a simple retrieval-augmented architecture with a fixed prompt template and a single base model. Real-world agents exhibit more complex cognitive architectures -- including planning, tool use, self-reflection, and heterogeneous base models -- which may affect information propagation dynamics and iso-mirror sensitivity.
\textit{(iv) Black-box setting.} Our methodology observes only query-response pairs. Richer observation models (internal activations, reasoning traces, tool calls, or memory writes) may detect adversarial drift earlier. 
We expect that monitoring systems with such access will  exhibit the learning-security tradeoff, though this is an open question.
More restrictive access such as view-only (as in Moltbook \citep{schlichtm2025moltbook} or other proposed agent ecnomomies \citep{tomasev2025distributional} may greatly restrict detection possibilities.
\textit{(v) Agent-level control.} Our control charts monitor the system-level iso-mirror trajectory and do not localize which agent is responsible for a detected shift. 
Agent-level monitoring (e.g., applying adaptive control charts to individual trajectories $\psi_n^{(t)}$)  may identify adversarial agents more directly, but will face the same learning-security tradeoff compounded by a multiple testing burden across agents.

On the theoretical side, Assumption~\ref{ass:stationarity} (i.i.d.\ stationarity) is not observed in practice or simulation (see Figure \ref{fig:control}) and is stronger than necessary. 
The results extend to mixing processes at the cost of more involved concentration arguments. Further, Assumptions~\ref{ass:sensitivity} and~\ref{ass:concentration} are stated as conditions rather than derived from first principles. Characterizing them for specific dissimilarities via perturbation theory remains open, as does characterizing the contamination function $h$ in Corollary~\ref{cor:contamination}. 
Finally, developing defenses theoretically robust to slow adversarial drift is a natural next step.
% \textit{iv) More complicated attacks / deployment.} Extending the adversarial model to coordinated, realistic multi-agent attacks \citep{shapira2026agentschaos} and validating the framework on deployed agent networks such as Moltbook \citep{schlichtm2025moltbook}.

\subsection*{Acknowledgments}
We gratefully acknowledge funding from Defense Advanced Research Projects Agency (DARPA) Artificial Intelligence Quantified (AIQ) award number HR00112520026.

% \clearpage

\bibliographystyle{abbrvnat}
\bibliography{biblio}

\appendix

\section{Additional Experimental Details}

\subsection{Prompt template and examples}
\label{app:prompt-template}

Every non-adversarial agent in the swarm queries its language model with the same two-message template (system prompt followed by user prompt). The user prompt is constructed by retrieving the top-$k$ question-answer pairs from the agent's local vector database, formatting each pair as a \texttt{Q:}/\texttt{A:} block, joining the blocks with a blank line, and substituting the result into \texttt{\{retrieved\_context\}}. Completions are sampled at \texttt{temperature\,=\,0.0} with \texttt{max\_tokens\,=\,500}.

\paragraph{Template.}
\begin{verbatim}
[system]
You answer questions from a database.

[user]
Database results:
{retrieved_context}

Question: {question}

Return the stored answer, or "I don't know" if no entry matches.
\end{verbatim}

\paragraph{Example 1: populated prompt ($k=3$).}
For the query \emph{``who wrote the declaration of independence''} the agent's retrieval yields three pairs and the user prompt becomes:
\begin{verbatim}
Database results:
Q: who signed the declaration of independence
A: 56 delegates to the Second Continental Congress

Q: when was the declaration of independence adopted
A: July 4, 1776

Q: what document did thomas jefferson draft in 1776
A: the Declaration of Independence

Question: who wrote the declaration of independence

Return the stored answer, or "I don't know" if no entry matches.
\end{verbatim}
A grounded response would read \emph{``Thomas Jefferson''}, drawing on the third retrieved pair.

\paragraph{Example 2: empty retrieval.}
When no entries are retrieved, the literal string \texttt{(No relevant information found)} is substituted in place of the formatted context, and the model is expected to return \texttt{I don't know}:
\begin{verbatim}
Database results:
(No relevant information found)

Question: what is the population of proxima centauri b

Return the stored answer, or "I don't know" if no entry matches.
\end{verbatim}

\subsection{No-LLM Proxy for Large-Scale Replications}
\label{app:nollm-proxy}

In order to study the probability of detection under various adversarial conditions, we simulate $100$ different replications of three different adversarial profiles: no adversary, a fast attack, and a slow attack. Due to the cost prohibitiveness of running several hundred simulations with an LLM in the loop, we design an experimental setup that captures the information propagation dynamics of the LLMs without needing to call their API explicitly. In particular, we replace the LLM call in our existing simulations with the proxy function detailed in Algorithm~\ref{alg:nollm-answer}.

Informally, Algorithm~\ref{alg:nollm-answer} assumes that non-adversarial agents will reproduce the correct answer to a question when they have it in their context, and respond with ``I don't know'' when they do not have the correct answer in their context. 
Further, Algorithm~\ref{alg:nollm-answer} assumes that non-adversarial agents will reproduce an adversarial answer they have retrieved from their database with probability $p_{\text{adv-prop}}$. 
In practice, we observed non-adversarial LLMs propagate our chosen canary ${\sim}80\%$ of the time, and set $p_{\text{adv-prop}}=0.8$ as a result.
Figure \ref{fig:validity-of-proxy} compares system evolution when using an LLM and when using Algorithm \ref{alg:nollm-answer}.
The dynamics are qualitatively similar and the accuracy of the agents are nearly identical. 

\begin{figure}
    \centering
    \includegraphics[width=\linewidth]{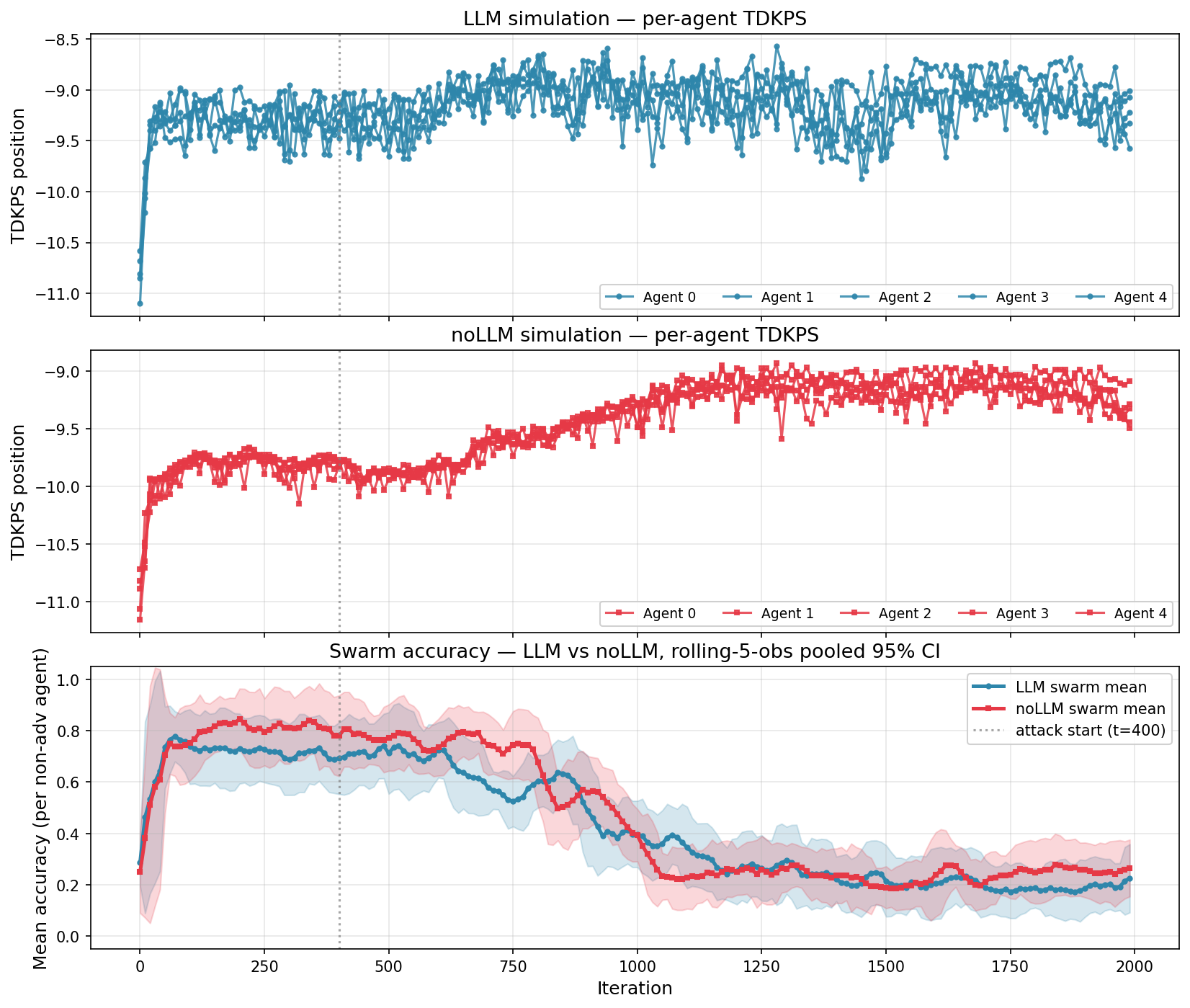}
    \caption{Agent-level dynamics when using LLM agents (blue) and retrieval-only proxy functions (red; Alg. \ref{alg:nollm-answer}).
    The LLM dynamics and proxy dynamics are qualitatively similar.
    The trajectories of the mean accuracy (bottom) of the responses are nearly identical.}
    \label{fig:validity-of-proxy}
\end{figure}

\begin{algorithm}[h!]
\caption{Retrieval-only proxy answer function}
\label{alg:nollm-answer}
\begin{algorithmic}[1]
  \Require Agent $A$ with vector DB $\mathcal{D}$; time $t$; query $q$; retrieval cap $k$; propagation probability $p_{\text{adv-prop}}$;   
  adversarial string $a^\star$.
  \If{$A$ is adversarial \textbf{and} $\mathrm{Bernoulli}(p_{\text{defect}}(t))$}
      \State \Return $a^\star$ \Comment{sigmoid-ramp defection for adversaries}
  \EndIf                                   
  \State $R \gets \mathrm{TopK}(\mathcal{D}, q, k, t)$ 
  \ForAll{$(q', a') \in R$ in score order}
      \If{$q' = q$}
          \If{$a' = a^\star$ \textbf{ and } $\mathrm{Bernoulli}(p_{\text{adv-prop}})$}
              \State \Return $a'$ \Comment{infected entry propagates}
          \EndIf
          \State \Return $a'$ \Comment{correct answer propagates}
      \EndIf
  \EndFor
  \State \Return \texttt{"I don't know"}
  \end{algorithmic}                        
  \end{algorithm} 

\section{Theoretical Results}
\label{app:proofs}

\subsection{Standard Control Chart Properties}\label{app:standard}

We present standard properties of the  control chart for completeness. Formal definitions of in-control and out-of-control are given in \S~\ref{sec:theory}.

\begin{proposition}[False alarm rate under normality]\label{prop:false-alarm}
If $F_0 = \mathcal{N}(\mu_0, \sigma_0^2)$ and the system is in-control at time $t$, then
\begin{equation*}
\alpha^{(t)} = P\!\left(|T_{w}| > k\right), \qquad T_w \stackrel{d}{=} \sqrt{\frac{w+1}{w}} \cdot t_{w-1},
\end{equation*}
where $t_{w-1}$ denotes a random variable with the Student $t$-distribution on $w-1$ degrees of freedom. In particular, $\alpha^{(t)}$ depends on $w$ and $k$ but not on $t$, $\mu_0$, or $\sigma_0$.
\end{proposition}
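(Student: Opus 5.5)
The plan is to reduce the alarm event to a statement about a single standardized statistic and then identify its distribution via Cochran's theorem. In-control at time $t$ means $S^{(t)}$ and the $w$ window observations $\{S^{(\tau)}:\tau\in\mathcal{W}^{(t)}\}$ are i.i.d.\ $\mathcal{N}(\mu_0,\sigma_0^2)$ (Assumption~\ref{ass:stationarity} with $F_0$ Gaussian). With $\hat\mu^{(t)}$ and $\hat\sigma^{(t)}$ as defined in \S\ref{sec:control-charts}, an alarm occurs iff
\[
|S^{(t)}-\hat\mu^{(t)}| > k\,\hat\sigma^{(t)}.
\]
So I define $T_w := (S^{(t)}-\hat\mu^{(t)})/\hat\sigma^{(t)}$. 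Then $\alpha^{(t)} = P(|T_w|>k)$, and the task is to identify the law of $T_w$.

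Next I standardize. Writing $S^{(\tau)} = \mu_0 + \sigma_0 Z_\tau$ with $Z_\tau$ i.i.d.\ standard normal, both the numerator and $\hat\sigma^{(t)}$ scale by $\sigma_0$, and $\mu_0$ cancels in each. Hence $T_w$ is a function of the $Z$'s alone, which already gives the "in particular" clause: the law of $T_w$ does not depend on $t$, $\mu_0$ or $\sigma_0$. I would rely on this same location-scale invariance later for Lemma~\ref{lem:window}.

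To get the distribution, I apply the standard normal-sample facts (Cochran's theorem) to the window $Z$'s.
\begin{itemize}
\item The window mean $\bar Z$ is independent of the window sample variance $s^2$.
\item $\bar Z \sim \mathcal{N}(0,1/w)$.
\item $(w-1)s^2 \sim \chi^2_{w-1}$.
\end{itemize}
The new observation $Z_t$ is independent of the window, so $Z_t-\bar Z \sim \mathcal{N}(0,1+1/w)$ and is independent of $s^2$. Therefore
\[
T_w = \frac{Z_t-\bar Z}{s} = \sqrt{\tfrac{w+1}{w}}\cdot \frac{(Z_t-\bar Z)/\sqrt{(w+1)/w}}{\sqrt{\chi^2_{w-1}/(w-1)}},
\]
and the second factor is exactly $t_{w-1}$ by the definition of the Student distribution. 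This is the prediction-interval pivot.

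The only delicate step is the independence of $Z_t-\bar Z$ and $s^2$, and it is not a real obstacle. It follows because $s^2$ is independent of $\bar Z$ (Cochran), $Z_t$ is independent of the whole window, and hence $s^2$ is independent of the pair $(Z_t,\bar Z)$.

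Finally, I note that $\hat\sigma^{(t)}>0$ almost surely, by continuity of the normal distribution with $w\ge 2$, so the ratio is well defined and the alarm event coincides with $\{|T_w|>k\}$ up to a null set.
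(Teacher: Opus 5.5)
Your proposal is correct and follows essentially the same route as the paper's proof. Both reduce to standard normals by location--scale invariance, use the independence of the new observation from the window together with the normal-sample facts to get $Z_t-\bar Z\sim\mathcal{N}(0,(w+1)/w)$ independent of $s^2\sim\chi^2_{w-1}/(w-1)$, and read off the scaled $t_{w-1}$ law; your explicit argument that $s^2$ is independent of the pair $(Z_t,\bar Z)$ fills in a step the paper passes over, since it also needs $\bar Z$ independent of $s^2$ and not only $Z_t$ independent of the window.
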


\begin{proof}
Under in-control conditions, $S^{(t)}, S^{(t-1)}, \ldots, S^{(t-w)}$ are i.i.d.\ $\mathcal{N}(\mu_0, \sigma_0^2)$. The standardized statistic $(S^{(t)} - \hat{\mu}^{(t)}) / \hat{\sigma}^{(t)}$ is invariant to location and scale, so we may assume without loss of generality that $\mu_0 = 0$ and $\sigma_0 = 1$.

Write $Z_0 = S^{(t)}$ and $Z_i = S^{(t-i)}$ for $i = 1, \ldots, w$. Then $\hat{\mu}^{(t)} = \bar{Z}$ and $\hat{\sigma}^{(t)} = \hat{\sigma}_Z$, where $\bar{Z} = w^{-1}\sum_{i=1}^w Z_i$ and $\hat{\sigma}_Z^2 = (w-1)^{-1}\sum_{i=1}^w (Z_i - \bar{Z})^2$. Since $Z_0$ is independent of $(\bar{Z}, \hat{\sigma}_Z)$ -- the latter being functions of $Z_1, \ldots, Z_w$ only -- we have
\begin{equation*}
Z_0 - \bar{Z} \sim \mathcal{N}\!\left(0, \tfrac{w+1}{w}\right),
\end{equation*}
independently of $\hat{\sigma}_Z^2 \sim \chi^2_{w-1}/(w-1)$. By the definition of the $t$-distribution,
\begin{equation*}
\frac{(Z_0 - \bar{Z})\big/\sqrt{(w+1)/w}}{\hat{\sigma}_Z} \sim t_{w-1},
\end{equation*}
so $(Z_0 - \bar{Z})/\hat{\sigma}_Z \sim \sqrt{(w+1)/w} \cdot t_{w-1}$. The alarm probability is therefore
\begin{equation*}
\alpha^{(t)} = P\!\left(|t_{w-1}| > k\sqrt{\tfrac{w}{w+1}}\right),
\end{equation*}
which depends only on $w$ and $k$.
\end{proof}

\begin{remark}
For large $w$, $\alpha^{(t)} \approx 2(1 - \Phi(k))$; the classical choice $k = 3$ gives $\alpha \approx 0.0027$.
\end{remark}

\begin{corollary}[Finite detection horizon]\label{cor:finite-horizon}
Under the conditions of Lemma~\ref{lem:window}:
\begin{enumerate}
    \item For $t \in [t^*, t^* + w)$, the window contains a mixture of pre-shift and post-shift observations, and $1 - \beta^{(t)}(\delta) > \alpha^{(t)}$ whenever $|\delta|$ is sufficiently large relative to $\sigma_0$.
    \item For $t \geq t^* + w$, $1 - \beta^{(t)}(\delta) = \alpha^{(t)}$.
\end{enumerate}
\end{corollary}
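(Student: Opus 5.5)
The corollary has two parts, and I would prove part 2 first because it follows directly from Lemma~\ref{lem:window}.

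Fix $t \geq t^* + w$. Then the adaptive window $\mathcal{W}^{(t)} = \{t-w,\ldots,t-1\}$ lies entirely in $[t^*,\infty)$. Together with $S^{(t)}$, the $w+1$ observations are i.i.d.\ $F_\delta$, where $F_\delta(\cdot) = F_0(\cdot - \delta)$. Write $S^{(\tau)} = \tilde S^{(\tau)} + \delta$ with $\tilde S^{(\tau)}$ i.i.d.\ $F_0$. The shift cancels in both estimators:
\begin{equation*}
S^{(t)} - \hat\mu^{(t)} = \tilde S^{(t)} - \tilde\mu^{(t)}, \qquad \hat\sigma^{(t)} = \tilde\sigma^{(t)},
\end{equation*}
where $\tilde\mu^{(t)}$ and $\tilde\sigma^{(t)}$ are the window mean and standard deviation of the $\tilde S$'s. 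The alarm event $\{|S^{(t)} - \hat\mu^{(t)}| > k\hat\sigma^{(t)}\}$ therefore has the in-control probability $\alpha^{(t)}$, which is exactly the statement of Lemma~\ref{lem:window}. Since $1-\beta^{(t)}(\delta)$ is by definition the alarm probability under the shift, this gives $1-\beta^{(t)}(\delta) = \alpha^{(t)}$.

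The window description in part 1 is immediate: for $t \in [t^*, t^*+w)$, the window $\{t-w,\ldots,t-1\}$ contains the $t^*-(t-w)\ge 1$ pre-shift indices and $t-t^*$ post-shift ones. The substantive claim is that power exceeds $\alpha^{(t)}$ once $|\delta|/\sigma_0$ is large. I would argue this asymptotically. Condition on the $j = t^*-t+w \geq 1$ pre-shift window values, and again write every post-shift value as its $F_0$ counterpart plus $\delta$.

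\begin{itemize}
\item \emph{Case $j = w$} (i.e.\ $t = t^*$). The window is entirely pre-shift, so $\hat\mu^{(t)}$ and $\hat\sigma^{(t)}$ do not depend on $\delta$. As $|\delta|\to\infty$, $|S^{(t)}-\hat\mu^{(t)}| \ge |\delta| - |\tilde S^{(t)}-\hat\mu^{(t)}| \to \infty$ almost surely while $\hat\sigma^{(t)}$ stays fixed and finite. Hence the alarm probability tends to $1$, which exceeds $\alpha^{(t)}$ because $\alpha^{(t)} < 1$.
\item \emph{Case $1 \le j < w$} (the main obstacle). Both numerator and denominator grow linearly in $\delta$. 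Since the current observation is post-shift,
\begin{equation*}
S^{(t)}-\hat\mu^{(t)} = \tfrac{j}{w}\,\delta + O_p(1), \qquad \hat\sigma^{(t)} = |\delta|\sqrt{\tfrac{j(w-j)}{w(w-1)}} + O_p(1),
\end{equation*}
where the second expression is the sample standard deviation of a two-point mixture with masses $j$ and $w-j$ separated by $\delta$. The standardized statistic therefore converges in probability to $c_j := \sqrt{j(w-1)/(w-j)}$. If $c_j > k$, the alarm probability tends to $1 > \alpha^{(t)}$. Otherwise I would only claim the weaker conclusion, and I would flag the case $c_j < k$, where the alarm probability actually tends to $0$, as a likely caveat requiring an implicit restriction.
\end{itemize}

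I expect this last case to be the main obstacle. The clean case is $t$ near $t^*$: there $c_j$ is large (at $j=w-1$ it equals $w-1$), so the claimed power excess holds. I would state part 1 as \emph{there exist} $t$ in the window (namely those with $c_j > k$, including $t = t^*$) where the power exceeds $\alpha^{(t)}$. I would note that "whenever $|\delta|$ is sufficiently large" must be read for such $t$, possibly including a continuity argument in $\delta$ via dominated convergence to pass from the limit to finite large $\delta$.
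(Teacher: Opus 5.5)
\textbf{Part 2.} Your part 2 is the paper's argument: location invariance once the window is entirely post-shift, i.e.\ Lemma~\ref{lem:window}.

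\textbf{Part 1: your diagnosis is right, and it exposes an error in the paper.} For part 1 you diverge from the paper, and your route is the sound one. The paper's proof asserts $\hat\sigma^{(t)} = \sigma_0 + O_p(w^{-1/2})$ on the mixed window. That silently drops the between-group term $\delta^2 j(w-j)/(w(w-1))$ of the sample variance. The paper then concludes that the signal-to-noise ratio $(j/w)\delta/\sigma_0$ is unbounded in $\delta$ for every $t\in[t^*,t^*+w)$. Your expansion of $\hat\sigma^{(t)}$ is correct. By the triangle inequality, $\bigl|\hat\sigma^{(t)} - |\delta|\sqrt{j(w-j)/(w(w-1))}\bigr|$ is at most the window standard deviation of the unshifted $\tilde S$'s. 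So as $|\delta|\to\infty$ the standardized statistic converges, pointwise in the noise, to a finite constant. Part 1 as stated is therefore false for some $t$, and the paper's proof fails at exactly that step. Weakening the claim as you propose is the right fix.

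\textbf{Your constant is wrong.} Dividing your two expansions gives
\[
c_j = \frac{j/w}{\sqrt{j(w-j)/(w(w-1))}} = \sqrt{\frac{j(w-1)}{w(w-j)}},
\]
not $\sqrt{j(w-1)/(w-j)}$. As a check, take $j=1$: one value at $0$ and $w-1$ values at $\delta$ give $\hat\sigma^{(t)}\approx|\delta|/\sqrt{w}$ and $S^{(t)}-\hat\mu^{(t)}\approx\delta/w$, so the ratio is $1/\sqrt{w}$. Your formula gives $1$.

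\textbf{The missing $\sqrt{w}$ inverts which case is the exception.} Correctly, $c_{w-1}=(w-1)/\sqrt{w}$, not $w-1$. Moreover, $c_j>k$ iff $j>k^2w^2/(k^2w+w-1)$, i.e.\ iff $t-t^*<w(w-1)/((k^2+1)w-1)\approx w/(k^2+1)$.
\begin{itemize}
\item For the paper's $w=200$, $k=3$, this holds only for $t-t^*\le 19$.
\item On the remaining 90\% of $[t^*,t^*+w)$, the alarm probability tends to $0<\alpha^{(t)}$ as $|\delta|\to\infty$. A large shift inflates $\hat\sigma^{(t)}$ enough to hide itself.
\item If $(w-1)/\sqrt{w}<k$ (e.g.\ $w\le 10$ at $k=3$), only $t=t^*$ survives.
\end{itemize}
So the corrected part 1 is a dichotomy, not a small caveat near $t^*+w$. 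Power tends to $1$ on a short initial segment of length about $w/(k^2+1)$, which always contains $t=t^*$. Power tends to $0$ on the rest of the window.

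\textbf{Minor point.} You do not need a continuity argument in $\delta$. A limiting alarm probability of $1$ already gives power $>\alpha^{(t)}$ for all sufficiently large $|\delta|$.
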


\begin{proof}
Part (2) is immediate from Lemma~\ref{lem:window}.

For part (1), when $t \in [t^*, t^* + w)$, the window $\mathcal{W}^{(t)}$ contains $t - t^*$ post-shift observations and $w - (t - t^*)$ pre-shift observations. Consequently, $\hat{\mu}^{(t)} = \mu_0 + \frac{t - t^*}{w}\delta + O_p(w^{-1/2})$ and $\hat{\sigma}^{(t)} = \sigma_0 + O_p(w^{-1/2})$. The current observation $S^{(t)}$ has mean $\mu_0 + \delta$, so the effective signal-to-noise ratio is
\begin{equation*}
\frac{\mu_0 + \delta - \hat{\mu}^{(t)}}{\hat{\sigma}^{(t)}} \approx \frac{\left(1 - \frac{t - t^*}{w}\right)\delta}{\sigma_0},
\end{equation*}
which is strictly positive for $t \in [t^*, t^* + w)$ and vanishes at $t = t^* + w$. Power therefore exceeds $\alpha^{(t)}$ throughout $[t^*, t^* + w)$ for $|\delta|$ sufficiently large.
\end{proof}

\begin{remark}
Lemma~\ref{lem:window} relies on the assumption that $F_\delta$ is a pure location shift of $F_0$. If the shift also changes the variance or higher moments, the alarm rate for $t \geq t^* + w$ may differ from $\alpha^{(t)}$. We exploit the pure-location-shift case in the adversarial analysis, where a strategic adversary can match all moments except the mean.
\end{remark}

\subsection{Proof of Lemma~\ref{lem:window}}
\label{app:proof-window}

\begin{proof}
For $t \geq t^* + w$, the window $\mathcal{W}^{(t)} = \{t-w, \ldots, t-1\}$ lies entirely in $\{t^*, t^*+1, \ldots\}$, so $S^{(\tau)} \overset{\mathrm{i.i.d.}}{\sim} F_\delta$ for all $\tau \in \mathcal{W}^{(t)}$ and $S^{(t)} \sim F_\delta$ independently.

Since $F_\delta(\cdot) = F_0(\cdot - \delta)$, the joint distribution of $(S^{(t)}, \{S^{(\tau)}\}_{\tau \in \mathcal{W}^{(t)}})$ under the shifted process is identical (up to a global location shift of $\delta$) to the joint distribution under in-control conditions. The alarm criterion $S^{(t)} \notin [\hat{\mu}^{(t)} - k\hat{\sigma}^{(t)},\; \hat{\mu}^{(t)} + k\hat{\sigma}^{(t)}]$ is location-invariant: the statistic $(S^{(t)} - \hat{\mu}^{(t)})/\hat{\sigma}^{(t)}$ has the same distribution whether the common mean is $\mu_0$ or $\mu_0 + \delta$. Therefore,
\begin{equation*}
P\!\left(A^{(t)} = 1 \mid \text{all window and current observations from } F_\delta\right) = P\!\left(A^{(t)} = 1 \mid \text{in-control}\right) = \alpha^{(t)}. \qedhere
\end{equation*}
\end{proof}

\subsection{Proof of Theorem~\ref{thm:evasion}}
\label{app:proof-evasion}

We construct an explicit defection schedule in $L$ epochs.

Fix $L \geq 1$ (to be chosen). Define the defection schedule:
\begin{equation*}
\pi(t) = \begin{cases} 0 & \text{if } t < t^*, \\ p_L & \text{if } t \in [t^* + \ell w, \, t^* + (\ell+1) w) \text{ for some } \ell \in \{0, 1, \ldots, L-1\}, \end{cases}
\end{equation*}
where $p_L = \delta / (L \cdot C_1)$ and $C_1$ is the sensitivity constant from Assumption~\ref{ass:sensitivity}.

\paragraph{Step 1: Alarm probability control.}
Consider the alarm probability at any time $t$ during the $\ell$-th epoch, $t \in [t^* + \ell w, t^* + (\ell+1)w)$. The key observation is that $p_L = \delta/(L C_1) \to 0$ as $L \to \infty$. By Assumption~\ref{ass:concentration}, the per-step perturbation to $S(\phi^{(t)})$ induced by defection at probability $p_L$ is sub-Gaussian with parameter proportional to $p_L$. In particular, the shift in the mean of $S^{(t)}$ between consecutive epochs is $O(p_L) = O(\delta/(LC_1))$, and the change in variance is $O(p_L^2)$.

For the alarm criterion, the relevant quantity is the ratio of the per-epoch mean shift to the estimated standard deviation:
\begin{equation*}
\frac{|\mathbb{E}[S^{(t)}] - \hat{\mu}^{(t)}|}{\hat{\sigma}^{(t)}} = O\!\left(\frac{p_L}{\sigma_0}\right) = O\!\left(\frac{\delta}{L C_1 \sigma_0}\right).
\end{equation*}
By Proposition~\ref{prop:false-alarm} (or its non-Gaussian analogue), the alarm probability is a continuous function of this ratio, equaling $\alpha^{(t)}$ when the ratio is zero. Therefore, for each $t$,
\begin{equation*}
P(A^{(t)} = 1) = \alpha^{(t)} + \gamma\!\left(\frac{\delta}{L C_1 \sigma_0}\right),
\end{equation*}
where $\gamma: \mathbb{R}_{\geq 0} \to \mathbb{R}_{\geq 0}$ is continuous with $\gamma(0) = 0$. Setting $\epsilon_L := \gamma(\delta/(LC_1\sigma_0))$, we have $\epsilon_L \to 0$ as $L \to \infty$.

\paragraph{Step 2: Displacement accumulation.}
After $L$ epochs (at time $T = t^* + Lw$), the cumulative defection has been active for $Lw$ time steps at probability $p_L$. By Assumption~\ref{ass:sensitivity}, the expected displacement satisfies:
\begin{equation*}
\mathbb{E}\!\left[\Delta_\pi^{(T)}\right] \geq L \cdot C_1 \cdot p_L = L \cdot C_1 \cdot \frac{\delta}{L C_1} = \delta.
\end{equation*}
We emphasize the mechanism: at the end of each epoch, Lemma~\ref{lem:window} guarantees that the \emph{alarm} resets (the monitoring baseline absorbs the shift), but the displacement in the \emph{actual} iso-mirror relative to the initial state $\phi_0^{(t^*)}$ persists. Each subsequent epoch adds another $C_1 p_L$ of displacement, and the total accumulates to $\delta$.

\paragraph{Step 3: Concentration.}
By Assumption~\ref{ass:concentration} and the independence of increments across epochs, the variance of $\Delta_\pi^{(T)}$ satisfies
\begin{equation*}
\mathrm{Var}\!\left(\Delta_\pi^{(T)}\right) = O\!\left(L \cdot p_L^2 \cdot \sigma_a^2\right) = O\!\left(\frac{\delta^2 \sigma_a^2}{L C_1^2}\right) \to 0 \quad \text{as } L \to \infty.
\end{equation*}
By Chebyshev's inequality, for any $\eta > 0$ we can choose $L$ large enough that $P(\Delta_\pi^{(T)} \geq \delta) \geq 1 - \eta$.

Thus, setting $T = t^* + Lw$, both conditions of the theorem are satisfied for $L$ sufficiently large. The total attack time is $T - t^* = Lw$. \qed

\begin{remark}[Speed-stealth tradeoff]
The minimum number of epochs $L^*$ must satisfy two constraints: a stealth constraint $\gamma(\delta/(L^* C_1 \sigma_0)) \leq \epsilon$ and a concentration constraint $\delta^2 \sigma_a^2 / (L^* C_1^2) \leq \eta'$. The stealth constraint dominates, giving $L^* = \Theta(\delta / (\sigma_0 \cdot \gamma^{-1}(\epsilon)))$ and an attack time of $\Theta(\delta w / (\sigma_0 \cdot \gamma^{-1}(\epsilon)))$. Under Gaussian assumptions, $\gamma(x) \approx 2(\Phi(k + x\sqrt{w/(w+1)}) - \Phi(k))$ for small $x$, so the attack time scales as $\Theta(\delta w \phi(k) / (\sigma_0 \epsilon))$ where $\phi$ is the standard normal density.
\end{remark}

\subsection{Proof of Corollary~\ref{cor:contamination}}
\label{app:proof-contamination}

The iso-mirror distance $\|\phi_\pi^{(T)} - \phi_0^{(T)}\|_2 \geq \delta$ (which holds with probability at least $1 - \eta$ by Theorem~\ref{thm:evasion}) implies that the distribution of TDKPS embeddings at time $T$ differs substantially from that at time $t^*$. Since the iso-mirror is constructed from the dissimilarity $\delta(\cdot,\cdot)$ between agent embedding distributions at different timepoints, a shift of magnitude $\delta$ in the iso-mirror implies a corresponding change in the agent embedding distribution.

The sleeper agents form a minority ($N_{\mathrm{sleeper}} < N$). Decompose the centroid of all agents as
\begin{equation*}
\bar{\psi}^{(t)} = \frac{N_{\mathrm{honest}}}{N} \bar{\psi}_{\mathrm{honest}}^{(t)} + \frac{N_{\mathrm{sleeper}}}{N} \bar{\psi}_{\mathrm{sleeper}}^{(t)},
\end{equation*}
where $\bar{\psi}_{\mathrm{honest}}^{(t)} = N_{\mathrm{honest}}^{-1} \sum_{n \notin \mathcal{F}_{\mathrm{sleeper}}} \psi_n^{(t)}$. If the overall embedding distribution shifts (as implied by the iso-mirror displacement) and the sleeper agents contribute at most $N_{\mathrm{sleeper}}/N$ of the total, then the honest agents must account for the remaining shift:
\begin{equation*}
\left\|\bar{\psi}_{\mathrm{honest}}^{(T)} - \bar{\psi}_{\mathrm{honest}}^{(t^*)}\right\|_2 \geq h(\delta),
\end{equation*}
where $h$ depends on the dissimilarity used in the iso-mirror construction, the ratio $N_{\mathrm{sleeper}}/N$, and the geometry of the embedding space. For instance, if the dissimilarity is the energy distance between embedding distributions, $h(\delta)$ is bounded below by a function that is strictly increasing in $\delta$ and decreasing in $N_{\mathrm{sleeper}}/N$. \qed

\subsection{Proof of Theorem~\ref{thm:tradeoff}}
\label{app:proof-tradeoff}

\paragraph{Adaptive control fails security.}
This is Theorem~\ref{thm:evasion}: for any target displacement $\delta > 0$ and $\eta > 0$, there exists a defection schedule achieving $(\delta, \eta)$-undetected defection in finite time.

\paragraph{Static control fails adaptivity.}
Under static control, control limits are $\mathrm{LCL} = \hat{\mu}_0 - k\hat{\sigma}_0$ and $\mathrm{UCL} = \hat{\mu}_0 + k\hat{\sigma}_0$, where $\hat{\mu}_0$ and $\hat{\sigma}_0$ are estimated from the burn-in period $\{1, \ldots, T_0\}$ and held fixed. If the environment changes so that $\mathbb{E}[S^{(t)}] = \mu_0 + \delta_{\mathrm{env}}$ for $t > T_0$, then by the law of large numbers, as $T_0 \to \infty$, $\hat{\mu}_0 \to \mu_0$ and $\hat{\sigma}_0 \to \sigma_0$. For $|\delta_{\mathrm{env}}| > k\sigma_0$, the probability
\begin{equation*}
P(S^{(t)} \notin [\mathrm{LCL}, \mathrm{UCL}]) = P\!\left(\left|\frac{S^{(t)} - \hat{\mu}_0}{\hat{\sigma}_0}\right| > k\right)
\end{equation*}
approaches~1 as $T_0 \to \infty$, since $S^{(t)}$ has mean $\mu_0 + \delta_{\mathrm{env}}$ while the control limits remain centered at $\mu_0$. This constitutes a false alarm (the change is legitimate), so static control fails adaptivity. \qed
\clearpage

\end{document}